\newcommand{\R}{\mathbb{R}}
\DeclareMathOperator{\tr}{Tr}
\theoremstyle{plain}
\newtheorem{Theorem}{Theorem}
\newtheorem{Corollary}{Corollary}
\newtheorem{Lemma}{Lemma}
\newtheorem{Proposition}{Proposition}
\newtheorem*{Definition}{Definition}
\title{\Huge{Ground states of \\self-gravitating elastic bodies}}
\author{Simone Calogero\footnote{E-Mail: calogero@ugr.es}\\[0.2cm]
Departamento de Matem\'atica Aplicada\\
Facultad de Ciencias, Universidad de Granada\\
Campus de Fuentenueva, 18071 Granada, Spain\\[0.5cm]
Tommaso Leonori\thanks{E-Mail:  tleonori@math.uc3m.es}\\[0.2cm]
Departamento de Matem\'aticas\\ Universidad Carlos III de Madrid\\Avenida de la Universidad 30,
28911 Legan\'es, 
Spain}
\date{}
\begin{document}

\maketitle
\abstract{The existence of static, self-gravitating elastic bodies in the non-linear theory of elasticity is established. Equilibrium configurations of self-gravitating elastic bodies close to the reference configuration have been constructed in~\cite{BS} using the implicit function theorem. In contrast, the steady states considered in this article correspond to deformations of the relaxed state with no size restriction and are obtained as minimizers of the energy functional of the elastic body.}

\section{Introduction}

Steady states of bodies subject to their own self-induced gravitational field are at the center of interest in theoretical astrophysics. They describe different physical systems according to the specific type of material making up the body. The most popular examples are steady states of self-gravitating fluids, which describe equilibrium configurations of stars~\cite{CH}, and of kinetic collisionless matter (Vlasov matter), which are models for steady galaxies~\cite{BT}. For these important examples there is a vast mathematical literature 
concerning the existence theory of static solutions, see for instance~\cite{CSS,JL,LMR,GR2,GR,GR3,W}. 
In contrast, the existence of static self-gravitating elastic bodies has been scarcely investigated so far, despite the prominent role that these models play in the context of neutron stars physics~\cite{Ch}. In fact, the only result that we are aware of is~\cite{BS} (and its extension to the general relativistic case~\cite{ABS}), where R.~Beig and B.~G.~Schmidt prove the existence of static, self-gravitating elastic bodies near the reference configuration by using the implicit function theorem. In the present article we consider elastic bodies that correspond to deformations of the reference state without any {\it a priori} size restriction. 
The precise formulation of our results is given in Section~\ref{mainres}. 
This Introduction continues with a general discussion on the problem of self-gravitating bodies in equilibrium. 

In the approximation in which thermal effects can be neglected, the mass density $\rho(x)$ of isolated, self-gravitating matter in equilibrium satisfies the equation
\begin{equation}\label{generaleq}
\mathrm{Div}\,\sigma=\rho\nabla V,\quad V(x)=-\int_{\R^3}\frac{\rho(y)}{|x-y|}\,dy,\quad x\in\R^3,
\end{equation}
where $\sigma=\sigma(x)$ is the Cauchy stress tensor and $V=V(x)$ is the gravitational potential self-induced by the matter distribution. 

The system~\eqref{generaleq} must be complemented by a constitutive law expressing the stress tensor in terms of the mass density. For simplicity we assume that this is given by an explicit equation $\sigma=\hat{\sigma}(x,\rho)$, with $\hat{\sigma}_{ij}(x,0)=0$, for all $x\in\R^3$ and $i,j=1,2,3$. For instance, for a barotropic fluid with pressure $p=p(\rho)$, $p(0)=0$, the stress tensor takes the form $\sigma_{ij}=-p\, \delta_{ij}$,  where 
$\delta_{ij}=1 $ if $j=i$ and $\delta_{ij}=0 $ otherwise.

\begin{Definition}[{\bf Single Body}]
The matter distribution is called a single body if there exists a non-empty open bounded connected set $\Omega\subset\R^3$ with boundary $\partial\Omega$ of zero Lebesgue measure such that $\rho>0$, for $x\in\Omega$ and $\rho=0$, for $\overline{\Omega}^{\,c}$. 
\end{Definition}
Note that our definition of single body does {\it not} require regularity of the boundary $\partial\Omega$. Moreover,
since the mass density of a single body need not vanish on $\partial\Omega$,  then $\rho$ is in general only a weak solution of~\eqref{generaleq}. We recall that a function $\rho:\R^3\to [0,\infty)$ is a weak solution of~\eqref{generaleq} if  $\rho\in L^{1} \cap L^{3/2}(\R^3)$, $\sigma\in L^1(\R^3)$ and the following integral equation holds: 
\begin{subequations}\label{generaleqweak}
\begin{align}
&\int_{\R^3}\tr(\sigma\cdot\nabla\phi^{\,T})\,dx=-\int_{\R^3}\int_{\R^3}\frac{\rho(x)\rho(y)}{|x-y|^3}\,(x-y)\cdot\phi(x)\,dx\,dy,\\[0.5cm]
&\text{for all $\phi:\R^3\to\R^3$, $\phi\in C^1_c(\R^3)$}, 
\end{align}
\end{subequations}
where  $A\cdot B$ denotes the dot product of the tensors $A$ and $B$ (i.e., the contraction of the last index of $A$ with the first index of $B$), while $A^{\,T}$ and $\tr(A)$ denote the transpose and the trace of the second order tensor $A$, respectively  (our convention for the gradient is $(\nabla f)_{ij}=\partial_{x^j}f_i$, for all functions $f:\R^3\to\R^3$).

For single body solutions, we may restrict the integrals in~\eqref{generaleqweak} over the support $\Omega$, and obtain that $\rho$ is a weak solution of~\eqref{generaleq} if the the mass density in the interior of the body solves the integral equation 
\begin{subequations}\label{generaleqbounded}
\begin{align}
&\int_{\Omega}\tr(\sigma\cdot\nabla\eta^{\,T})\,dx=-\int_{\Omega}\int_{\Omega}\frac{\rho(x)\rho(y)}{|x-y|^3}\,(x-y)\cdot\eta(x)\,dx\,dy,\label{equation2}\\[0.5cm]
&\text{for all $\eta:\Omega\to\R^3$, $\eta\in \mathcal{D}_{\Omega}(\R^3)$,}
\end{align}
where $\mathcal{D}_{\Omega}(\R^3)$ is the trace 
of $C_{c}^1(\R^3)$ on $\Omega$, i.e., the space
\begin{equation}
\mathcal{D}_{\Omega}(\R^3)=\{\eta \in C^1(\Omega) \, :\, \eta=\phi_{|_{\Omega}},\ \text{ for some }\phi:\R^3\to\R^3,\ \phi\in C^1_c(\R^3)\}.
\end{equation}
\end{subequations}
Since $\mathcal{D}_\Omega(\R^3)\subset C^1(\overline{\Omega})\subset H^1(\Omega)$, a subset of solutions are those that solve~\eqref{equation2} for all $\eta\in H^1(\Omega)$. In the latter case, and assuming that $\partial\Omega$ is smooth, $\rho$ is a weak solution of the following Dirichlet-type  boundary value problem
\begin{subequations}\label{boundaryvalueproblem}
 \begin{align}
&\mathrm{Div}\,\sigma=\rho\displaystyle{\int_{\Omega}\frac{\rho(y)}{|x-y|^2}\frac{x-y}{|x-y|}\,dy},& x\in\Omega,\label{equation}\\[0.5cm]
&\rho=\rho_b:\sigma(x,\rho_b)\cdot n=0,&x\in\partial\Omega\label{boundary},
\end{align}
\end{subequations}
where $n$ denotes the exterior unit normal vector field on $\partial\Omega$. We also remark that compactly supported solutions of~\eqref{generaleq} might exist only for special domains $\Omega$. For instance, isolated, self-gravitating fluid bodies are necessarily spherically symmetric~\cite{Li, Lin} and therefore they are all supported on a ball.


 
Problem~\eqref{generaleqbounded} is posed in the Euler formulation of continuum mechanics, which is the natural one for fluids. However for self-gravitating elastic bodies the problem is more naturally formulated in the Lagrangian picture, since the stress tensor of an elastic body depends on the deformation of the body; see~\cite{C} for an introduction to 3-dimensional non-linear elasticity.

Let $\mathcal{B}$ be an open, bounded, connected subset of $\R^3$ with smooth boundary and $\psi:\overline{\mathcal{B}}\to\R^3$ be an injective function on $\mathcal{B}$.  Denote
\[
F=\nabla\psi,\quad  \mathrm{Cof}\,F= F^{-T}\det F\ \text{(the matrix of cofactors of $F$)}.
\]
It is assumed that $\psi$ preserves orientation, i.e., $\det\nabla\psi (X) >0$, for all $X\in \overline{\mathcal{B}}$. The set $\overline{\mathcal{B}}$ is called reference configuration (or material manifold) and identifies the body  in a completely relaxed state, before any deformation takes place. Upon a deformation, the ``particle" $X$ of the body is moved to the new position $x=\psi(X)$. If we denote by $\rho_\mathrm{ref}:\overline{\mathcal{B}}\to (0,\infty)$ the mass density of the body in the reference configuration, and by $\rho:\psi(\overline{\mathcal{B}})\to (0,\infty)$ the mass density in the deformed state, the conservation of mass entails
\begin{equation}\label{local}
\rho(\psi(X))=\frac{\rho_\mathrm{ref}(X)}{\det\nabla\psi(X)}.
\end{equation}
The (first) Piola-Kirchhoff stress tensor is defined by
\begin{equation}\label{piola}
\Sigma(X)=\sigma (X) \cdot \mathrm{Cof}\,F(\psi(X)).
\end{equation}
The change of variable $x=\psi(X)$, together with~\eqref{local} and~\eqref{piola}  turns
 equation~\eqref{equation} into the form
\begin{equation}\label{generaleqL}
\mathrm{Div}\,\Sigma=\rho_\mathrm{ref}\int_\mathcal{B}\rho_\mathrm{ref}(Y)\frac{\psi(X)-\psi(Y)}{|\psi(X)-\psi(Y)|^3}\,dY, \quad X\in \mathcal{B}.
\end{equation}  
By inverting~\eqref{local} and~\eqref{piola}, sufficiently smooth solutions of~\eqref{generaleqL} are transformed into solutions of~\eqref{equation} in the domain $\Omega=\psi(\mathcal{B})$. If the deformation $\psi$ is injective on the closure of $\mathcal{B}$, then $\Omega$ is a domain with smooth boundary. However, if injectivity is lost at the boundary, as it is often the case, there is no guarantee that the set $\Omega$ possesses a smooth boundary, which is the main reason for introducing the definition of single body as above. We remark that interior injectivity prevents the interpenetration of matter, while injectivity up to the boundary prevents that even self-contact of the boundary of the body occurred.


The body is said to be elastic if $\Sigma$ depends on the configuration $\psi$ only through the deformation gradient $F=\nabla\psi$. Denoting by $\mathbb{M}^3$ the linear space of $3\times 3$ real matrices, and by $\mathbb{M}^3_+$ the subset thereof consisting of matrices with positive determinant, we then assume the existence of a function $\hat{\Sigma}:\mathcal{B}\times\mathbb{M}^3_+\to\mathbb{M}^3$ such that $\Sigma(X)=\hat{\Sigma}(X,F(X))$.
For elastic bodies we may introduce a variational formulation of~\eqref{generaleqL} as follows. Let a function $w:\mathcal{B}\times\mathbb{M}^3_+\to [0,\infty)$ be given and let us define the energy functional of the elastic body as
\begin{equation}\label{energyelastic}
I[\psi]=\int_\mathcal{B} \rho_\mathrm{ref}(X)w(X,\nabla\psi)\,dX-\frac{1}{2}\int_\mathcal{B}\int_\mathcal{B}\frac{\rho_\mathrm{ref}(X)\rho_\mathrm{ref}(Y)}{|\psi(X)-\psi(Y)|}\,dX\,dY.
\end{equation}
The function $w$ is the stored energy of the elastic body.  
The first term in the energy functional is the total strain energy, while the second term is the gravitational potential energy of the body. A simple formal calculation shows that~\eqref{generaleqL} corresponds to the Euler-Lagrange equation associated to the functional~\eqref{energyelastic} for a Piola-Kirchhoff stress tensor given by
 \begin{equation}\label{piolakir}
\Sigma=\rho_\mathrm{ref}\frac{\partial w}{\partial F}.
\end{equation}
This suggests that, under appropriate assumptions on the growth and the regularity of $w$,
one can obtain solutions of~\eqref{generaleqL}  by proving the existence of minimizers of the functional $I$ over a suitable space. The choice of the minimizing space is one of the main issues of the problem. We shall consider two different minimizing spaces, $\mathscr{A}^{(1)}$ and $\mathscr{A}^{(2)}$, which in the pure elastic case were first considered in~\cite{ball2,CN}. The deformations in the space $\mathscr{A}^{(1)}$ preserve the mass of any local subregion of the body, as well as the center of mass  in the deformed state, but they are not in general injective functions, not even in the interior of the body. On the other hand, the deformations in the space $\mathscr{A}^{(2)}$ are homeomorphisms on $\overline{\mathcal{B}}$ that preserve the shape of the body in the deformed state.

The results concerning the existence of a  minimum of the functional~\eqref{energyelastic} are contained in Theorem \ref{maintheo} below. 
Once the existence of minimizers of the functional~\eqref{energyelastic} has been proved, it is natural to ask whether they are solutions of~\eqref{generaleqL}. We are not able to give a positive answer to this question; even in the absence of self-gravitating interaction this is a major open problem in elasticity theory~\cite{balla}. However, following the ideas introduced by J.~Ball in~\cite{ball}, we are able to show that 
\begin{enumerate}
\item The spatial density $\rho^{(1)}$ associated to the minimizer $\psi^{(1)}$ in the space $\mathscr{A}^{(1)}$ is a solution of~\eqref{generaleqbounded} in the region $\Omega^{(1)}=\psi^{(1)}(\mathcal{B})$, and thus describes the interior mass density of an isolated, self-gravitating body in equilibrium; the boundary of $\Omega^{(1)}$ has zero Lebesgue measure, but we cannot say anything about its regularity. 
\item The spatial density $\rho^{(2)}$ associated to the minimizer $\psi^{(2)}$ in the space $\mathscr{A}^{(2)}$  solves equation~\eqref{equation} in a distributional sense in the domain $\Omega^{(2)}=\psi^{(2)}(\mathcal{B})$; in this case, the boundary of $\Omega^{(2)}$ is regular, but since $\rho^{(2)}$ need not satisfy~\eqref{generaleqweak}, the body will not in general be isolated. 
\end{enumerate} 
The precise statement of our main results is given in Section~\ref{mainres}; their proofs are to be found in Section~\ref{existence}.

\section{Preliminaries and main results}\label{mainres} 

We assume that $\mathcal{B}$ is a {\it regular domain}, i.e., a non-empty, open, bounded, connected subset of $\R^3$ with Lipschitz continuous boundary; in particular, $\mathrm{meas}\,\partial\mathcal{B}=0$. The mass density in the reference configuration $\rho_\mathrm{ref}:\overline{\mathcal{B}}\to (0,\infty)$ is assumed to satisfy
\begin{equation}\label{rhoref}
\rho_\mathrm{ref}\in  L^\infty (\mathcal{B}),\quad\mbox{ess}\inf_{\mathcal{B}}\, \rho_\mathrm{ref}(X)= \rho_0>0
\end{equation}
and we denote 
\[
M=\|\rho_\mathrm{ref}\|_{L^1}.
\] 
We write the energy functional of the elastic body as
\[
I[\psi]=E_\mathrm{str}[\psi]+E_\mathrm{pot}[\psi],
\] 
where $E_\mathrm{pot}[\psi]$ is the potential energy, 
\begin{subequations}
\begin{align}\label{potentialen}
&E_\mathrm{pot}[\psi]=-\frac{1}{2}\int_\mathcal{B}\int_\mathcal{B}\Theta_\psi(X,Y)\rho_\mathrm{ref}(X)\rho_\mathrm{ref}(Y)\,dX\,dY,\\
&\Theta_\psi(X,Y)=\frac{1}{|\psi(X)-\psi(Y)|},
\end{align}
\end{subequations}
while $E_\mathrm{str}[\psi]$ is the total strain energy of the body,
\begin{equation}\label{strainen}
E_\mathrm{str}[\psi]=\int_\mathcal{B} \rho_\mathrm{ref}(X)w(X,\nabla\psi)\,dX.
\end{equation}
Our first goal is to prove the existence of a minimizer to the functional $I$ in two different spaces, which we denote $\mathscr{A}^{(1)}$ and $\mathscr{A}^{(2)}$.
We assume $\mathscr{A}^{(i)}\subset W^{1,\beta}(\mathcal{B})$, for $\beta>3$, and henceforth any function in the Sobolev space $W^{1,\beta}(\mathcal{B})$ will be identified with its representative in $C^0(\overline{\mathcal{B}})$. Furthermore we assume that the elements of $\mathscr{A}^{(i)}$ satisfy $\det\nabla\psi>0$ for almost all $X\in \mathcal{B}$. Since $\nabla\psi=0$ a.e. on any set in which $\psi$ is constant (see for instance~\cite[Lemma~7.7]{GT}), we infer that $|\psi(X)-\psi(Y)|>0$ for almost all $X,Y\in\mathcal{B}\times\mathcal{B}$ and therefore the function $\Theta(X,Y)$ that appears in the potential energy~\eqref{potentialen} is well-defined almost everywhere in $\mathcal{B}\times\mathcal{B}$.

Before completing the definition of the spaces $\mathscr{A}^{(i)}$, we need to list our assumptions on the stored energy function: 
\begin{itemize}
\item[(w1)] Polyconvexity: There exists a convex function $\hat{w}(X,\cdot):\mathbb{M}^3\times\mathbb{M}^3\times (0,+\infty)\to\R$ such that $w(X,F)=\hat{w}(X,F,\mathrm{Cof}\,F,\det F)$, for all $F\in\mathbb{M}^3_+$ and for almost all $X\in \mathcal{B}$;
\item[(w2)] Regularity: The function $\hat{w}(\cdot,F,H,\delta)$ is measurable for all $(F,H,\delta)\in\mathbb{M}^3\times\mathbb{M}^3\times (0,\infty)$.
\end{itemize}
In addition we impose a lower bound on the stored energy function which depends on the minimizing space considered. For $F\in\mathbb{M}^3$ we denote $|F|=\sqrt{\mathrm{Tr}(F\cdot F^T)}$ (or any other equivalent matrix norm), while $|a|$ stands for the standard Euclidean norm when $a$ is a real number. When working in the space $\mathscr{A}^{(1)}$ we assume that the stored energy functions satisfies
\begin{itemize}
\item[(w3)$^{(1)}$] There exist
\begin{equation}\label{esse}
p>6,\quad q\geq \frac{p}{p-1},\quad s>\frac{2p}{p-6},
\end{equation}
a constant $\alpha>0$ and a function $h\in L^1 (\mathcal{B})$ such that
\[
w(X,F)\geq \alpha(|\!\det F|^{-s}+|F|^p+|\mathrm{Cof}\,F|^q)+ h(X), 
\]
for almost all $X\in \mathcal{B}$.
\end{itemize}
When working in the space $\mathscr{A}^{(2)}$ we assume  
\begin{itemize}
\item[(w3)$^{(2)}$] There exist
\begin{equation}\label{esse2}
p>3,\quad q>3,\quad s>\frac{2q}{q-3},
\end{equation}
a constant $\alpha>0$ and a function $h\in L^1 (\mathcal{B})$ such that
\[
w(X,F)\geq \alpha(|\!\det F|^{-s}+|F|^p+|\mathrm{Cof}\,F|^q)+ h(X), 
\]
for almost all $X\in \mathcal{B}$.
\end{itemize}
Up to the choice of the exponents $p,q,s$, the assumptions listed above are standard in elasticity theory. For a throughout discussion on stored energy functions and examples that satisfy (or do not) the above conditions, we refer to~\cite{Ball} and~\cite[Chs.~3-4]{C}. At the end of the present section we shall discuss briefly the class of Ogden materials. We also remark that there exist additional physical conditions that the stored energy  functions must satisfy, which however are irrelevant for our analysis. For example, $w$ must be normalized such that
\begin{equation}\label{normalization}
w(X,\mathbb{I})=0,\quad\text{ for all $X\in\mathcal{B}$}.
\end{equation} 

Next we complete the definition and prove some basic properties of the spaces $\mathscr{A}^{(1)}$, $\mathscr{A}^{(2)}$. 

\subsubsection*{The space $\mathscr{A}^{(1)}$}
Let
\begin{equation}\label{setS}
S_\psi=\{x\in\psi(B) : \mathrm{card}\,\{\psi^{-1}(x)\}>1\},
\end{equation}
where $\{\psi^{-1}(x)\}$ denotes the pre-image set of $x\in\psi(B)$ and $\mathrm{card}\,U$ denotes the cardinality of the set $U$, and let $a\in\R^3$. Given a stored energy function that satisfies the properties (w1), (w2), (w3)$^{(1)}$, we define the space $\mathscr{A}^{(1)}$ as 
\begin{align*}
\mathscr{A}^{(1)}=\{\psi\in W^{1,p}(\mathcal{B}):\ &
E_\mathrm{str}[\psi]<\infty,\\ 
& \det\nabla\psi>0\, \text{ a.e. on $\mathcal{B}$,}\\  
&\mathrm{meas}\,S_\psi=0,\\
&\int_\mathcal{B}\rho_\mathrm{ref}(X)\psi(X)\,dX=a\}.
\end{align*}
Since we can always find a constant matrix $C\in\mathbb{M}^3_+$ and a vector $d\in\R^3$ (depending on $a$ and $\rho_\mathrm{ref}$) such that $\psi(X)=C\cdot X+d\in\mathscr{A}^{(1)}$, then $\mathscr{A}^{(1)}$ is not empty. 

A function $\psi$ that satisfies the condition $\mathrm{meas}\,S_\psi=0$ is often referred to as a.e. injective~\cite{C}.
The meaning of the last condition in the definition of $\mathscr{A}^{(1)}$ will be clarified shortly. 
Before proving some basic properties of functions in the space $\mathscr{A}^{(1)}$, let us note that
\[
S_\psi=\psi(K_\psi),\quad \mbox{where} \quad K_\psi=\{X\in\mathcal{B}:\exists\, Y\in\mathcal{B}, Y\neq X, \text{ such that } \psi(X)=\psi(Y)\}
\]
and we recall that the change of variables formula for Sobolev maps $\psi\in W^{1,\beta}(\mathcal{B})$, $\beta>3$, is~\cite[Th.~2]{MM}
\begin{equation}\label{changevariablegen}
\int_Uf(\psi(X))\,\det\nabla\psi(X)\,dX=\int_{\psi(U)}f(x)\,\mathrm{card}\,\{\psi^{-1}(x)\}\,dx,
\end{equation}
for all measurable sets $U\subset\mathcal{B}$ and measurable functions $f:\psi(U)\to\R$. In particular, since functions in the space $\mathscr{A}^{(1)}$ are a.e.~injective we obtain
\begin{equation}\label{changevariable}
\int_Uf(\psi(X))\,\det\nabla\psi(X)\,dX=\int_{\psi(U)}f(x)\,dx,\quad\text{for all $\psi\in\mathscr{A}^{(1)}$}.
\end{equation}
\begin{Lemma}\label{propertiesA1}
For all $\psi\in\mathscr{A}^{(1)}$ the following holds:
\begin{itemize}
\item[\rm{(i)}] $\psi$ maps null sets into null sets;
\item[\rm{(ii)}] $\psi$ is open;
\item[\rm{(iii)}] $\psi(\mathcal{B})$ is open, connected and satisfies
\begin{equation}\label{inclusions}
\overline{\psi(\mathcal{B})}=\psi(\overline{\mathcal{B}}),\quad \partial \psi(\mathcal{B})\subset\psi(\partial\mathcal{B});
\end{equation}
\item[\rm{(iv)}] $\mathrm{meas}\, K_\psi=\mathrm{meas}\,\partial\psi(\mathcal{B})=0$;
\item[\rm{(v)}] The spatial density
\begin{equation}\label{spatialdensity}
\rho_\psi(x)=\frac{\rho_\mathrm{ref}(\psi^{-1}(x))}{\det\nabla\psi(\psi^{-1}(x))},
\end{equation}
is almost everywhere defined and positive on $\psi(\mathcal{B})$, and satisfies
\begin{equation}\label{localmasscons}
\int_U\rho_\mathrm{ref}(X)\,dX=\int_{\psi(U)}\rho_\psi(x)\,dx,\quad\text{for all $\psi\in\mathscr{A}^{(1)}$}. 
\end{equation}
In particular, $\|\rho\|_{L^1(\R^3)}=M$, for all $\psi\in\mathscr{A}^{(1)}$.
\item[\rm{(vi)}] The last condition in the definition of the space $\mathcal{A}^{(1)}$  can be rewritten as
\begin{equation}\label{com}
\int_{\psi(\mathcal{B})}x\,\rho_\psi(x)\,dx=a,\quad \text{for all $\psi\in\mathscr{A}^{(1)}$;}
\end{equation}
\item[\rm{(vii)}]  The potential energy can be expressed as
\begin{subequations}\label{epot}
\begin{equation}
E_\mathrm{pot}[\psi]=e_\mathrm{pot}[\overline{\rho}_\psi],
\end{equation}
where, denoting by $\mathbb{I}_U$ the characteristic function of the set $U$, 
\begin{equation}
\overline{\rho}_\psi=\rho_\psi\mathbb{I}_{\psi(\mathcal{B})},\quad e_\mathrm{pot}[\rho]=\int_{\R^3}\int_{\R^3}\frac{\rho(x)\rho(y)}{|x-y|}\,dx\,dy.
\end{equation}
\end{subequations}
\end{itemize}
\end{Lemma}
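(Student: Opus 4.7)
The plan is to exploit the Sobolev embedding $W^{1,p}(\mathcal{B})\hookrightarrow C^0(\overline{\mathcal{B}})$ (valid since $p>6>3$), the change-of-variables formula \eqref{changevariablegen}, and a handful of standard facts about a.e.\ injective Sobolev deformations; most of the seven items will then reduce to a one-line application of these tools, once (i) and (ii) are in hand.

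For (i), I would invoke the classical Lusin N property for continuous Sobolev maps with integrability exponent strictly above the dimension (Marcus--Mizel), which directly yields $\mathrm{meas}\,\psi(E)=0$ whenever $\mathrm{meas}\,E=0$. For (ii), the openness of $\psi$ is the deepest input: an a.e.\ injective, continuous Sobolev map with $\det\nabla\psi>0$ a.e.\ is open by a degree-theoretic argument going back to Ball (and refined by \v{S}ver\'ak); I would cite this rather than reprove it, and it is the main obstacle in the lemma. For (iii), openness of $\psi(\mathcal{B})$ and connectedness are immediate from (ii) and continuity. Compactness of $\overline{\mathcal{B}}$ gives $\psi(\overline{\mathcal{B}})$ closed and hence $\overline{\psi(\mathcal{B})}\subset\psi(\overline{\mathcal{B}})$, while the reverse inclusion follows by writing each $X\in\overline{\mathcal{B}}$ as a limit of interior points and using continuity. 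Finally $\partial\psi(\mathcal{B})=\overline{\psi(\mathcal{B})}\setminus\psi(\mathcal{B})\subset \psi(\overline{\mathcal{B}})\setminus\psi(\mathcal{B})\subset \psi(\partial\mathcal{B})$ because $\psi(\mathcal{B})$ is open.

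For (iv), I would apply \eqref{changevariablegen} with $U=K_\psi$ and $f\equiv 1$: since $\psi(K_\psi)=S_\psi$ has measure zero and the multiplicity is at least $2$ there, the right-hand side vanishes, giving $\int_{K_\psi}\det\nabla\psi\,dX=0$; combined with $\det\nabla\psi>0$ a.e.\ this forces $\mathrm{meas}\,K_\psi=0$. The boundary estimate $\mathrm{meas}\,\partial\psi(\mathcal{B})=0$ then follows from the inclusion $\partial\psi(\mathcal{B})\subset\psi(\partial\mathcal{B})$ proved in (iii), the assumption $\mathrm{meas}\,\partial\mathcal{B}=0$, and the Lusin N property (i).

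The remaining items are direct consequences of \eqref{changevariable}. For (v), $\rho_\psi$ is well-defined a.e.\ on $\psi(\mathcal{B})$ because $S_\psi$ has measure zero by definition of $\mathscr{A}^{(1)}$, so on the complement $\psi^{-1}$ is single-valued; positivity is inherited from $\rho_\mathrm{ref}>\rho_0$ and $\det\nabla\psi>0$; taking $f(x)=\rho_\psi(x)$ in \eqref{changevariable} produces \eqref{localmasscons}, and $U=\mathcal{B}$ yields $\|\rho_\psi\|_{L^1}=M$. Statement (vi) is \eqref{changevariable} applied componentwise with $f(x)=x_i$. Finally, (vii) follows by applying \eqref{changevariable} twice to the double integral in \eqref{potentialen}, using that $|\psi(X)-\psi(Y)|^{-1}$ is defined a.e.\ on $\mathcal{B}\times\mathcal{B}$ thanks to (iv) (null-measure coincidence set) and that $\overline{\rho}_\psi$ is supported in $\psi(\mathcal{B})$, matching the convolution form of $e_\mathrm{pot}$.
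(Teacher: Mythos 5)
Most of your items ((i), (iii)--(vii)) follow the same route as the paper: Lusin's property (N) from Marcus--Mizel for (i), the standard topological argument for (iii), the change-of-variables formula applied to $K_\psi$ for (iv), and \eqref{changevariable} for (v)--(vii). (A small omission in (v): besides $S_\psi$ you must also discard the image $T_\psi=\psi(\{\det\nabla\psi=0\})$ of the null set where the Jacobian vanishes or is undefined, which is null by (i); otherwise positivity of $\rho_\psi$ is not justified pointwise a.e. This is a one-line fix.)

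The genuine gap is in (ii). You propose to quote, as a known theorem, that an a.e.\ injective continuous $W^{1,p}$ map with $\det\nabla\psi>0$ a.e.\ is open ``by a degree-theoretic argument going back to Ball and \v{S}ver\'ak''. No such result is available under these hypotheses: the degree-theoretic invertibility theorems of Ball and Ciarlet--Ne\v{c}as require boundary data given by a homeomorphism (that is exactly the $\mathscr{A}^{(2)}$ setting of Lemma~\ref{propertiesA2}, where openness comes from invariance of domain after invertibility), and in $\mathscr{A}^{(1)}$ there is no boundary condition to anchor a degree argument; moreover, in the theory of mappings of finite distortion, continuity, $\det\nabla\psi>0$ a.e.\ and Sobolev regularity alone do not imply openness --- all known openness/discreteness theorems require quantitative integrability of the distortion, and counterexamples show such hypotheses cannot simply be dropped. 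The paper's proof uses precisely the extra information encoded in the definition of $\mathscr{A}^{(1)}$, namely $E_\mathrm{str}[\psi]<\infty$ together with the coercivity (w3)$^{(1)}$, which gives $\nabla\psi\in L^p$ and $(\det\nabla\psi)^{-1}\in L^s$; by H\"older the outer distortion $K_O=|\nabla\psi|^3/\det\nabla\psi$ lies in $L^{\beta}$ with $\beta=ps/(p+3s)$, and the conditions $p>6$, $s>2p/(p-6)$ in \eqref{esse} are exactly what makes $\beta>2$, so that the Heinonen--Koskela theorem on Sobolev mappings with integrable dilatation applies and yields openness. Your proposal never uses $E_\mathrm{str}[\psi]<\infty$ or the exponent restrictions at all, which is a sign the argument cannot be right as stated: without them the conclusion of (ii) is not known (and the a.e.\ injectivity assumption does not substitute for distortion integrability). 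Repairing (ii) along these lines is necessary before the remaining items, which are fine, can stand.
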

\begin{proof}
The property (i) is valid in general for all functions $\psi\in W^{1,\beta}(\mathcal{B})$, $\beta>3$, see~\cite{MM}.
To prove (ii), we first note that 
\[
\int_{\mathcal{B}}\left(|\nabla\psi|^p+(\det\nabla\psi)^{-s}\right)dX<\infty,
\]
for all $\psi\in\mathscr{A}^{(1)}$, where $p,s$ are given by~\eqref{esse}. 
Let us introduce the {\it outer distortion} $K_O$ of $\psi$:
\[
K_O(X)=\frac{|\nabla\psi|^3}{\det\nabla\psi}.
\]
By H\"older's inequality, $K_O\in L^\beta(\mathcal{B})$, $\beta=ps/(p+3s)$. Since $\beta>2$, then (ii) follows by~\cite[Cor.~1.10]{HK}. Next we prove (iii). By (ii), $\psi(\mathcal{B})$ is open; since continuous functions map connected sets into connected sets, it is also connected. The properties~\eqref{inclusions} are proved in~\cite[Thm.~1.2-7]{C} under the hypothesis that $\psi$ is injective on $\mathcal{B}$, but the same result, with the same simple proof, holds under the (weaker) assumption that $\psi$ is open.  
As to (iv), applying the change of variable formula~\eqref{changevariable} we have
\[
\int_{K_\psi}\det\nabla\psi\,dX=\int_{S_\psi}dx=\mathrm{meas}\,S_{\psi}=0.
\]  
Since $\det\nabla\psi>0$ a.e. in $\mathcal{B}$, then meas$\,K_\psi$=0. Moreover, meas$\,\partial\psi(\mathcal{B})=0$ follows by (i) and the inclusion $\partial \psi(\mathcal{B})\subset\psi(\partial\mathcal{B})$. To prove (v), let
\[
T_\psi=\{x\in\psi(B):\det\nabla\psi(X)=0,\ \text{for some }X\in\{\psi^{-1}(x)\}\}.
\]
Since $T_\psi=\psi(\{X\in B: \det\nabla\psi(X)=0\})$, then meas\,$T_\psi=0$. As the spatial density~\eqref{spatialdensity} is well-defined and positive for $x\in\psi(B)\setminus(S_\psi\cup T_\psi)$, the first part of the claim follows. The identities~\eqref{localmasscons},~\eqref{com} and~\eqref{epot} are proved using~\eqref{changevariable}. 
\end{proof}
We remark that by~\eqref{localmasscons}, 
 {\it the deformations $\psi\in\mathscr{A}^{(1)}$ preserve the mass of any subregion of the body}, while by~\eqref{com},  {\it the deformations $\psi\in\mathscr{A}^{(1)}$ leave invariant the center of mass of the body in the deformed state.}

\subsubsection*{The space $\mathscr{A}^{(2)}$}
For the definition of $\mathcal{A}^{(2)}$, 
let a function $\zeta:\overline{\mathcal{B}}\to\R^3$, $\zeta\in W^{1,p}(\mathcal{B})$, $p>3$, be given with the following properties:
\begin{itemize}
\item[(z1)] $\zeta$ is one-to-one on $\overline{\mathcal{B}}$, $\det\nabla\zeta>0$ a.e. in $\mathcal{B}$ and $E_\mathrm{str}[\zeta]<\infty$;
\item[(z2)] The boundary of the set $\zeta(\mathcal{B})$ is Lipschitz continuous.
\end{itemize}
Note that under assumptions (z1)-(z2), $\zeta(\mathcal{B})$ is a regular domain and that a sufficient condition for (z2) is that $\zeta$ be a $C^1$ diffeomorphism on $\overline{\mathcal{B}}$. Moreover by~\cite[Th~1.2-8]{C} we have
\begin{equation}\label{domainzeta}
\zeta(\overline{\mathcal{B}})=\overline{\zeta(\mathcal{B})},\quad \partial\zeta(\mathcal{B})=\zeta(\partial\mathcal{B}).
\end{equation}  
Given a stored energy function that satisfies the properties (w1), (w2), (w3)$^{(2)}$ and a function $\zeta$ as above, we define the minimizing space $\mathscr{A}^{(2)}$ as
\begin{align*}
\mathscr{A}^{(2)}=\{\psi\in W^{1,p}(\mathcal{B}):\ &
E_\mathrm{str}[\psi]<\infty,\\ 
& \det\nabla\psi>0\,\text{ a.e. on $\mathcal{B}$,}\\
& \psi=\zeta\text{ for }x\in\partial\mathcal{B}\}.
\end{align*}
Since $\zeta\in\mathscr{A}^{(2)}$, then $\mathscr{A}^{(2)}$ is not empty.
\begin{Lemma}\label{propertiesA2}
The properties listed in Lemma~\ref{propertiesA1} are also valid for all $\psi\in\mathscr{A}^{(2)}$, and in addition the following holds:
\begin{itemize} 
\item[\rm{(A)}] $\psi(\overline{\mathcal{B}})=\zeta(\overline{\mathcal{B}})$;
\item[\rm{(B)}] $\psi$ is a homeomorphism of $\overline{\mathcal{B}}$ onto $\zeta(\overline{\mathcal{B}})$; 
\item[\rm{(C)}] The inverse function $\psi^{-1}$ belongs to $W^{1,r}(\zeta(\mathcal{B}))$, where
\begin{equation}\label{expr}
r=\frac{q(1+s)}{q+s}\quad(r>3);
\end{equation}
\item[\rm{(D)}] There holds
\begin{equation}\label{inclusion2}
\psi(\mathcal{B})=\zeta(\mathcal{B}),\quad \partial\psi(\mathcal{B})=\partial\zeta (\mathcal{B}).
\end{equation}
In particular, $\psi(\mathcal{B})$ is a regular domain.
\end{itemize}
\end{Lemma}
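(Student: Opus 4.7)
The plan is to first establish (A)-(D). Once (B) is in hand, every $\psi\in\mathscr{A}^{(2)}$ is injective on $\overline{\mathcal{B}}$, so $S_\psi=\emptyset$ and the a.e.~injectivity hypothesis $\mathrm{meas}\,S_\psi=0$ used in the proof of Lemma~\ref{propertiesA1} holds trivially; properties (i)-(v) and (vii) then follow by the same arguments as there, with the exponents from (w3)$^{(2)}$ in place of those from (w3)$^{(1)}$. Property (vi), being specific to the defining center-of-mass constraint of $\mathscr{A}^{(1)}$, is not claimed.

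The main obstacle is (A) and (B), whose proof I would reduce to a global invertibility result. Since $p>3$, every $\psi\in W^{1,p}(\mathcal{B})$ admits a continuous representative on $\overline{\mathcal{B}}$, so the trace condition $\psi=\zeta$ on $\partial\mathcal{B}$ holds pointwise. Under (w3)$^{(2)}$ the finiteness of $E_\mathrm{str}[\psi]$ supplies $|\nabla\psi|\in L^p$, $|\mathrm{Cof}\,\nabla\psi|\in L^q$ and $(\det\nabla\psi)^{-1}\in L^s$ with $p,q>3$, which are exactly the integrability hypotheses of Ball's global invertibility theorem~\cite{ball2} (see also~\cite{CN} and \cite[Ch.~5]{C}). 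Applying that theorem with boundary datum $\zeta$ --- a homeomorphism of $\overline{\mathcal{B}}$ onto $\zeta(\overline{\mathcal{B}})$ by (z1) and~\eqref{domainzeta} --- and using $\det\nabla\psi>0$ a.e.~yields that $\psi$ is a homeomorphism of $\overline{\mathcal{B}}$ onto $\zeta(\overline{\mathcal{B}})$, which is (A) and (B) simultaneously. Property (D) is then immediate: by Brouwer's invariance of domain, $\psi$ sends $\mathcal{B}$ into $\zeta(\mathcal{B})$ and $\partial\mathcal{B}$ onto $\partial\zeta(\mathcal{B})$, so $\psi(\mathcal{B})=\zeta(\mathcal{B})$ and $\partial\psi(\mathcal{B})=\partial\zeta(\mathcal{B})$.

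For (C), the a.e.~invertibility of $\nabla\psi(X)$, guaranteed by $\det\nabla\psi>0$ a.e., gives
\[
(\nabla\psi(X))^{-1}=\frac{(\mathrm{Cof}\,\nabla\psi(X))^T}{\det\nabla\psi(X)},
\]
and since $\psi$ is now a homeomorphism the chain-rule identity $\nabla\psi^{-1}(\psi(X))=(\nabla\psi(X))^{-1}$ holds a.e., so that the change-of-variables formula~\eqref{changevariable} yields
\[
\int_{\zeta(\mathcal{B})}|\nabla\psi^{-1}(x)|^r\,dx=\int_{\mathcal{B}}\frac{|\mathrm{Cof}\,\nabla\psi(X)|^r}{(\det\nabla\psi(X))^{r-1}}\,dX.
\]
H\"older's inequality with exponents $q/r$ and $s/(r-1)$ bounds the right-hand side, provided the compatibility $r/q+(r-1)/s=1$ holds; solving yields exactly $r=q(1+s)/(q+s)$ as in~\eqref{expr}, and an elementary algebraic check shows that the assumption $s>2q/(q-3)$ in~\eqref{esse2} is equivalent to $r>3$. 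Thus $\nabla\psi^{-1}\in L^r(\zeta(\mathcal{B}))$, and combined with $\psi^{-1}\in C^0(\overline{\zeta(\mathcal{B})})$ (from the homeomorphism property) this gives $\psi^{-1}\in W^{1,r}(\zeta(\mathcal{B}))$ as required.
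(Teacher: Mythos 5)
Your proof is correct and follows essentially the same route as the paper: the same H\"older estimate showing $\int_{\mathcal{B}}(\det\nabla\psi)^{1-r}|\mathrm{Cof}\,\nabla\psi|^{r}\,dX<\infty$ with $r=q(1+s)/(q+s)>3$, an appeal to Ball's global invertibility theorem \cite[Th.~2]{ball2} for (A)--(C), the boundary identification for (D), and the observation that injectivity forces $S_\psi=\emptyset$ so that Lemma~\ref{propertiesA1} carries over. One caution: your hand-made derivation of (C) asserts that the a.e.\ chain rule $\nabla\psi^{-1}(\psi(X))=(\nabla\psi(X))^{-1}$ follows merely from $\psi$ being a Sobolev homeomorphism, which is not automatic; however this identity, together with $\psi^{-1}\in W^{1,r}$, is exactly part of the conclusion of \cite[Th.~2]{ball2} under the integrability condition you have already verified, so nothing essential is missing.
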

\begin{proof}
Since
\[
\int_\mathcal{B}(|\mathrm{cof}\nabla\psi|^q+(\det\nabla\psi)^{-s})\,dX<\infty,\quad\text{for $\psi\in\mathscr{A}^{(2)}$},
\]
with $q,s$ given by~\eqref{esse2}, H\"older's inequality entails
\[
\int_\mathcal{B}|(\nabla\psi)^{-1}|^r\det\nabla\psi\,dX=\int_{\mathcal{B}}(\det\nabla\psi)^{1-r}|\mathrm{Cof}\nabla\psi|^r\,dX<\infty,
\]
where $r$ is given by~\eqref{expr}. Thus the claims (A), (B) and (C) follow by~\cite[Th.~2]{ball2}.
Applying again~\cite[Th~1.2-8]{C} we infer that the properties~\eqref{domainzeta} are also satisfied by $\psi$, which gives (D). Finally, all the properties listed in Lemma~\ref{propertiesA1} also hold for $\psi\in\mathscr{A}^{(2)}$, because injective continuous functions are open (Invariance of Domain Theorem) and satisfy $S_\psi=\emptyset$.
\end{proof}
By~\eqref{inclusion2}, it follows that {\it the deformations in the space $\mathscr{A}^{(2)}$ preserve the shape of the body in the deformed state} and that {\it self-contact of the boundary does not occur}.


\subsubsection*{Main results}
Our first  main result is the following:
\begin{Theorem}\label{maintheo} Let $\rho_\mathrm{ref}$ satisfy~\eqref{rhoref}. Fix $i=1$ or $2$. Let $w$ satisfy (w1), (w2), (w3)$^{(i)}$, and, for $i=2$, let $\zeta$ satisfy (z1), (z2). Then there exists $\psi^{(i)}\in\mathscr{A}^{(i)}$ such that 
\[
I[\psi^{(i)}]=\inf_{\mathscr{A}^{(i)}}I[\psi]. 
\]
Moreover, the spatial density 
\[
\rho^{(i)}=\rho_{\psi^{(i)}}
\]
satisfies $\rho^{(i)}\in L^\gamma(\psi^{(i)}(\mathcal{B}))$, for all $1\leq\gamma\leq 1+s$.
\end{Theorem}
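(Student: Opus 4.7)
The strategy is the direct method of the calculus of variations, applied in parallel to both cases $i=1,2$. I will first establish coercivity, extract weak limits from a minimizing sequence, prove lower semicontinuity of $I$, check that the limit lies in $\mathscr{A}^{(i)}$, and finally derive the spatial integrability bound.

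For coercivity, the key input is that (w3)$^{(i)}$ together with the change of variables formula~\eqref{changevariable} gives
\[
\|\rho_\psi\|_{L^{1+s}(\psi(\mathcal{B}))}^{1+s} = \int_\mathcal{B}\frac{\rho_\mathrm{ref}^{1+s}(X)}{(\det\nabla\psi(X))^s}\,dX \leq C\bigl(1+E_\mathrm{str}[\psi]\bigr),
\]
while $\|\rho_\psi\|_{L^1}=M$ by Lemma~\ref{propertiesA1}(v). Interpolation and the Hardy--Littlewood--Sobolev inequality then yield $|E_\mathrm{pot}[\psi]|\leq C\|\rho_\psi\|_{L^{6/5}}^2\leq C(1+E_\mathrm{str}[\psi])^{1/(3s)}$; since $s>2$ in both~\eqref{esse} and~\eqref{esse2}, the exponent is strictly less than $1$. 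Hence $I$ is bounded below on $\mathscr{A}^{(i)}$ and any minimizing sequence $\{\psi_n\}$ has uniformly bounded strain energy, which via (w3)$^{(i)}$ gives uniform bounds on $\nabla\psi_n$ in $L^p$, on $\mathrm{Cof}\,\nabla\psi_n$ in $L^q$, and on $(\det\nabla\psi_n)^{-1}$ in $L^s$; a bound on $\det\nabla\psi_n$ in a suitable $L^\alpha$ follows via the identity $(\det F)\,\mathbb{I}=F^T\,\mathrm{Cof}\,F$ and H\"older's inequality.

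Passing to a subsequence, $\psi_n\rightharpoonup\psi^{(i)}$ weakly in $W^{1,p}$, and by the compact embedding $W^{1,p}\hookrightarrow C^0(\overline{\mathcal{B}})$ (since $p>3$) this convergence is also uniform. The classical weak continuity of minors (Ball~\cite{Ball}) gives $\mathrm{Cof}\,\nabla\psi_n\rightharpoonup\mathrm{Cof}\,\nabla\psi^{(i)}$ in $L^q$ and $\det\nabla\psi_n\rightharpoonup\det\nabla\psi^{(i)}$ in the appropriate Lebesgue space. Fatou's lemma applied to $(\det\nabla\psi_n)^{-s}$ forces $\det\nabla\psi^{(i)}>0$ a.e.\ and preserves the strain integrability bound. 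Polyconvexity of $w$ combined with these weak convergences then yields $E_\mathrm{str}[\psi^{(i)}]\leq\liminf E_\mathrm{str}[\psi_n]$. To establish $E_\mathrm{pot}[\psi_n]\to E_\mathrm{pot}[\psi^{(i)}]$ I split the Lagrangian double integral into a ``far'' region $\{|\psi^{(i)}(X)-\psi^{(i)}(Y)|>\varepsilon\}$, on which uniform convergence and dominated convergence apply, and its complement; on the latter, a change to spatial coordinates and the uniform $L^{1+s}$ bound on $\rho_{\psi_n}$ give a contribution of order $O(\varepsilon^\alpha)$ for some $\alpha>0$, uniformly in $n$ (the relevant Riesz-kernel mass $\int_{|x-y|<\varepsilon}|x-y|^{-1}\,dy=2\pi\varepsilon^2$ being small).

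It remains to check $\psi^{(i)}\in\mathscr{A}^{(i)}$. For $\mathscr{A}^{(2)}$ the boundary condition $\psi=\zeta$ on $\partial\mathcal{B}$ is preserved by uniform convergence, and $\psi^{(2)}$ is automatically a homeomorphism with $S_{\psi^{(2)}}=\emptyset$ via Lemma~\ref{propertiesA2}. For $\mathscr{A}^{(1)}$ the center-of-mass constraint $\int_\mathcal{B}\rho_\mathrm{ref}\psi\,dX=a$ passes to the limit by uniform convergence, while a.e.~injectivity of $\psi^{(1)}$ is the subtlest point: following the strategy of~\cite{ball2}, the uniform bound on $(\det\nabla\psi_n)^{-s}$ together with a.e.~injectivity of each $\psi_n$ is enough to guarantee a.e.~injectivity of the weak limit. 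The integrability statement then follows at once from $\|\rho^{(i)}\|_{L^{1+s}}^{1+s}\leq\|\rho_\mathrm{ref}\|_\infty^{1+s}\int_\mathcal{B}(\det\nabla\psi^{(i)})^{-s}\,dX<\infty$ combined with $\|\rho^{(i)}\|_{L^1}=M$ and interpolation on the scale $L^1\cap L^{1+s}$. The main obstacle is the joint treatment of the nonlocal, singular potential energy (one must rule out concentration of mass in the weak limit) and, for $i=1$, the preservation of a.e.~injectivity without any boundary anchoring of the deformation.
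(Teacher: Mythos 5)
Your overall skeleton is the same direct method the paper uses: coercivity through the $(\det\nabla\psi)^{-s}$ term with a sublinear (exponent $1/(3s)$) bound on $|E_\mathrm{pot}|$, weak $W^{1,p}$ compactness plus uniform convergence, weak continuity of the minors, polyconvexity for lower semicontinuity of $E_\mathrm{str}$, and continuity of $E_\mathrm{pot}$ along the minimizing sequence (your near/far splitting of the double integral is a legitimate variant of the paper's estimate $|E_\mathrm{pot}[\psi_n]-E_\mathrm{pot}[\psi]|\leq C\|\psi_n-\psi\|_{L^\infty}\|\Theta_{\psi_n}\|_{L^2}\|\Theta_{\psi}\|_{L^2}$; both rest on uniform convergence and the uniform bound on $\int_\mathcal{B}(\det\nabla\psi_n)^{-s}dX$). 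The genuine gap is precisely the step you yourself call the subtlest: that the weak limit satisfies $\mathrm{meas}\,S_{\psi^{(1)}}=0$, so that $\psi^{(1)}\in\mathscr{A}^{(1)}$. The tool you invoke is the wrong one: the global invertibility theorem of~\cite{ball2} requires the deformation to coincide on $\partial\mathcal{B}$ with a given homeomorphism and proceeds by degree theory; it is exactly what underlies $\mathscr{A}^{(2)}$ (Lemma~\ref{propertiesA2}) and is unavailable in $\mathscr{A}^{(1)}$, where there is no boundary anchoring. Moreover, the assertion that ``the uniform bound on $(\det\nabla\psi_n)^{-s}$ together with a.e.\ injectivity of each $\psi_n$'' suffices is not an argument: that bound controls the spatial density, not the multiplicity of the limit map, and nothing you wrote excludes that $\psi^{(1)}$ covers a set of positive measure of its image more than once.

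What closes this step (and is the paper's route, in the spirit of Ciarlet--Ne\v{c}as~\cite{CN}) is a volume identity, not an integrability bound: for each $n$, a.e.\ injectivity and~\eqref{changevariable} give
\[
\int_\mathcal{B}\det\nabla\psi_n\,dX=\mathrm{meas}\,\psi_n(\mathcal{B});
\]
one passes to the limit using $\det\nabla\psi_n\rightharpoonup\det\nabla\psi^{(1)}$ in $L^{p/3}$, the uniform convergence $\psi_n\to\psi^{(1)}$, and $\mathrm{meas}\,\partial\psi^{(1)}(\mathcal{B})=0$ (which follows from openness of $\psi^{(1)}$, i.e.\ from the distortion estimate using $\nabla\psi^{(1)}\in L^p$ and $\int_\mathcal{B}(\det\nabla\psi^{(1)})^{-s}dX<\infty$, as in Lemma~\ref{propertiesA1}), obtaining $\int_\mathcal{B}\det\nabla\psi^{(1)}dX=\mathrm{meas}\,\psi^{(1)}(\mathcal{B})$; then the area formula with multiplicity~\eqref{changevariablegen} converts this into $\int_{\psi^{(1)}(\mathcal{B})}\mathrm{card}\{(\psi^{(1)})^{-1}(x)\}\,dx=\mathrm{meas}\,\psi^{(1)}(\mathcal{B})$, which forces $\mathrm{card}=1$ a.e., i.e.\ $\mathrm{meas}\,S_{\psi^{(1)}}=0$. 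Two smaller imprecisions of the same kind: ``Fatou applied to $(\det\nabla\psi_n)^{-s}$'' does not by itself yield $\det\nabla\psi^{(i)}>0$ a.e., because you only have weak convergence of the determinants; one must first argue, as in~\cite[pp.~374--375]{C}, that on a set where the weak limit vanishes the nonnegative determinants converge strongly in $L^1$, hence a.e.\ along a subsequence, and only then apply Fatou against the uniform bound. And the bound on $\{\psi_n\}$ in $W^{1,p}$ (not just on the gradients) needs Poincar\'e's inequality anchored by the center-of-mass constraint for $i=1$ and by the boundary data for $i=2$; in your write-up the constraint is used only to pass to the limit, not to prevent the sequence from escaping to infinity.
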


Next we study the relation between the minimizers of the energy functional and the mass density of static, self-gravitating bodies.  To this purpose we need an additional assumption on the stored energy function:
\begin{itemize}
\item[(w4)] $w(X,\cdot)$ is $C^1$ and there exists a constant $K>0$ such that
\[
\left|\frac{\partial w}{\partial F}\cdot F^T\right|\leq K(w(X,F)+1).
\]
\end{itemize}
Let us denote 
\[
\Omega^{(i)}=\psi^{(i)}(\mathcal{B}).
\]
By Lemma~\ref{propertiesA1}(iii) and Lemma~\ref{propertiesA2}(D), $\Omega^{(i)}$ is open, connected and $\mathrm{meas}\,\partial\Omega^{(i)}=0$; $\Omega^{(2)}$, in addition, has a Lipschitz continuous boundary and so is a regular domain. Define the Cauchy stress tensor $\sigma^{(i)}:\Omega^{(i)}\to\mathbb{M}^3$ as
\begin{equation}\label{cauchystress}
\sigma^{(i)}(x)=\left(\frac{\Sigma(X,\nabla\psi^{(i)}(X))\cdot\nabla\psi^{(i)}(X)^T}{\det\nabla\psi^{(i)}(X)}\right)_{|_{X=(\psi^{(i)})^{-1}(x)}},\quad\Sigma=\rho_\mathrm{ref}\frac{\partial w}{\partial F}.
\end{equation}

Our result concerning the equation solved by the minimizers is the following.

\begin{Theorem}\label{propmin}
In addition to the hypotheses of Theorem~\ref{maintheo}, let $w$ satisfy (w4). 
Then  $\sigma^{(i)}\in L^1(\Omega^{(i)})$ and for $i=1$ the following identity holds:
\begin{equation}\label{viral}
e_\mathrm{pot}[\overline{\rho}_{(1)}]=\int_{\R^3}\tr\,\overline{\sigma}_{(1)}\,dx,
\end{equation}
where
\[
\overline{\rho}_{(1)}=\rho^{(1)}\mathbb{I}_{\Omega^{(1)}},\quad \overline{\sigma}_{(1)}=\sigma^{(1)}\mathbb{I}_{\Omega^{(1)}}.
\]
Moreover $(\rho^{(1)},\sigma^{(1)})$ solves~\eqref{generaleqbounded}, while $(\rho^{(2)},\sigma^{(2)})$ satisfies~\eqref{equation2} for all functions $\eta\in C^{1}(\overline{\Omega^{(2)}})$ such that $\eta=0$ for $x\in\partial\Omega^{(2)}$.
\end{Theorem}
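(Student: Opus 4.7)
I would start by applying the change of variables formula~\eqref{changevariable} to the definition~\eqref{cauchystress}: the Jacobian $\det\nabla\psi^{(i)}$ cancels, leaving $\int_{\Omega^{(i)}}|\sigma^{(i)}|\,dx=\int_\mathcal{B}\rho_\mathrm{ref}(X)\,|(\partial w/\partial F)(X,\nabla\psi^{(i)})\cdot(\nabla\psi^{(i)})^T|\,dX$. Assumption (w4) then bounds this by $K\int_\mathcal{B}\rho_\mathrm{ref}(w(X,\nabla\psi^{(i)})+1)\,dX=K(E_\mathrm{str}[\psi^{(i)}]+M)<\infty$.

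\textbf{Euler--Lagrange equation via a perturbation in Eulerian coordinates.} Following the strategy introduced by Ball, the plan is to use the composed perturbation $\psi_\epsilon=(\mathrm{id}+\epsilon\phi)\circ\psi^{(i)}$, where $\phi\in C^1_c(\R^3)$ when $i=1$ and $\phi\in C^1(\overline{\Omega^{(2)}})$ with $\phi|_{\partial\Omega^{(2)}}=0$ when $i=2$. For $|\epsilon|$ sufficiently small (depending on $\|\nabla\phi\|_\infty$), $\mathrm{id}+\epsilon\phi$ is a bi-Lipschitz homeomorphism of $\R^3$ equal to the identity outside $\mathrm{supp}\,\phi$, so $\psi_\epsilon$ lies in $\mathscr{A}^{(i)}$: a.e.\ injectivity and positivity of $\det\nabla\psi_\epsilon=\det(\mathbb{I}+\epsilon\nabla\phi(\psi^{(i)}))\det\nabla\psi^{(i)}$ are preserved, and for $i=2$ the Dirichlet condition at $\partial\mathcal{B}$ is maintained since $\phi$ vanishes on $\partial\Omega^{(2)}=\psi^{(i)}(\partial\mathcal{B})$. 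For $i=1$, the center-of-mass constraint introduces an a priori Lagrange multiplier $\lambda\in\R^3$ in the variational identity; testing against $\phi\equiv c\in\R^3$ makes the strain integrand vanish (since $\nabla\phi=0$) and the gravitational double integral vanish (by antisymmetry of $x-y$ under $x\leftrightarrow y$), forcing $\lambda=0$. Using $\nabla\psi_\epsilon=(\mathbb{I}+\epsilon\nabla\phi(\psi^{(i)}))\nabla\psi^{(i)}$ and differentiating $I[\psi_\epsilon]$ at $\epsilon=0$ under the integral sign, the condition $dI/d\epsilon|_{\epsilon=0}=0$ becomes, after changing variables back to Eulerian coordinates via~\eqref{changevariable} and symmetrizing the double integral over $X\leftrightarrow Y$, precisely~\eqref{equation2} with $\eta=\phi|_{\Omega^{(i)}}$. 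For $i=1$, by definition of $\mathcal{D}_{\Omega^{(1)}}(\R^3)$ this recovers the full~\eqref{generaleqbounded}; for $i=2$, it recovers the restricted weak formulation stated in the theorem.

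\textbf{Virial identity and main obstacle.} For the virial identity~\eqref{viral} (case $i=1$), since $\overline{\Omega^{(1)}}$ is compact I can pick $\chi_R\in C^1_c(\R^3)$ equal to $1$ on a ball containing $\overline{\Omega^{(1)}}$ and apply the established weak equation to $\eta(x)=\chi_R(x)\,x\in\mathcal{D}_{\Omega^{(1)}}(\R^3)$; on $\Omega^{(1)}$ one has $\nabla\eta=\mathbb{I}$, so the left-hand side collapses to $\int_{\R^3}\tr\overline{\sigma}_{(1)}\,dx$, while on the right the swap $x\leftrightarrow y$ combined with the algebraic identity $(x-y)\cdot x+(y-x)\cdot y=|x-y|^2$ reduces the double integral to the Newtonian self-energy of $\overline{\rho}_{(1)}$, yielding $e_\mathrm{pot}[\overline{\rho}_{(1)}]$ in the convention fixed by~\eqref{epot}. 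The main technical obstacle throughout is the justification of dominated convergence for the gravitational first variation: one needs a uniform-in-$\epsilon$ lower bound of the form $|\psi_\epsilon(X)-\psi_\epsilon(Y)|\geq(1-C\epsilon\|\nabla\phi\|_\infty)|\psi^{(i)}(X)-\psi^{(i)}(Y)|$ from the bi-Lipschitz property of $\mathrm{id}+\epsilon\phi$, and then integrability of the kernel $|\psi^{(i)}(X)-\psi^{(i)}(Y)|^{-1}$ against $\rho_\mathrm{ref}\otimes\rho_\mathrm{ref}$, which after change of variables follows from Hardy--Littlewood--Sobolev applied to $\rho^{(i)}\in L^{6/5}$ (guaranteed by Theorem~\ref{maintheo} since $1+s>6/5$). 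Dominated convergence for the strain term is then direct from (w4), which also provides (through a Gronwall-type estimate on $\epsilon\mapsto w(X,\nabla\psi_\epsilon)$) the finiteness $E_\mathrm{str}[\psi_\epsilon]<\infty$ needed for the competitor $\psi_\epsilon$ to be admissible.
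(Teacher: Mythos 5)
Your proposal follows the paper's strategy for the core of the theorem: the $L^1$ bound on $\sigma^{(i)}$ is identical, and the weak Euler--Lagrange equation is obtained, as in the paper, from the outer variation $\psi^{(i)}+\epsilon\,\phi\circ\psi^{(i)}$ (Ball's method), with the dominated-convergence issues for the gravitational term resolved by the same two ingredients the paper uses, namely the Lipschitz estimate~\eqref{holder} for $\phi$ on $\Omega^{(i)}$ and the integrability of powers of $\Theta_{\psi^{(i)}}$ (Lemma~\ref{esttheta}). Two genuine differences are worth recording. First, the virial identity: you derive~\eqref{viral} as a corollary of the already-established weak equation by testing with $\eta(x)=\chi_R(x)\,x$ and symmetrizing in $x\leftrightarrow y$; the paper instead proves it independently, before the weak equation, via the dilation $\psi^{(1)}_\tau=(1-\tau)\psi^{(1)}+\mathrm{const}$, which stays inside $\mathscr{A}^{(1)}$, together with a Gr\"onwall estimate from (w4) to justify differentiating under the integral. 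The two are equivalent (the dilation is exactly the affine test direction $\eta(x)=-x+\mathrm{const}$), and your route is the more economical one once the weak equation is in hand. Your treatment of a.e.~injectivity of the competitor via global invertibility of $\mathrm{id}+\epsilon\phi$ is also slightly cleaner than the paper's Corollary that $K_{\psi^{(1)}_\tau}=K_{\psi^{(1)}}$, though it rests on the same estimate.

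The one genuine soft spot is your handling of the center-of-mass constraint for $i=1$. You cannot simply assert that the constraint ``introduces an a priori Lagrange multiplier $\lambda\in\R^3$'': $\mathscr{A}^{(1)}$ is not an open subset of a smooth constraint manifold in a Banach space, $I$ is not known to be differentiable in any ambient sense, and the determinant and a.e.-injectivity conditions already make the admissible set highly non-smooth, so the standard multiplier theorems do not apply off the shelf. The paper sidesteps the issue entirely by perturbing with
\[
\psi^{(1)}_\tau(X)=\psi^{(1)}(X)+\tau\,\eta(\psi^{(1)}(X))-\frac{\tau}{M}\int_\mathcal{B}\rho_\mathrm{ref}\,\eta(\psi^{(1)})\,dX,
\]
i.e.\ subtracting the compensating constant so that $\int_\mathcal{B}\rho_\mathrm{ref}\psi^{(1)}_\tau\,dX=a$ holds exactly and the competitor is genuinely admissible; since a constant shift changes neither $\nabla\psi^{(1)}_\tau$ nor the differences $\psi^{(1)}_\tau(X)-\psi^{(1)}_\tau(Y)$, the value of $I$ and hence the first variation are unaffected, and no multiplier ever appears. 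Your observation that constant directions are null directions for $I$ is precisely what makes this fix work, so the repair is immediate --- but as written the multiplier step is an unjustified assertion and should be replaced by the compensated perturbation.
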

Since the pair $(\overline{\rho}_{(1)},\overline{\sigma}_{(1)})$ solves~\eqref{generaleqweak},  then  $\overline{\rho}_{(1)}$ is the mass density of an isolated single body supported in the region $\Omega^{(1)}$. Note however that we prove no results neither on the regularity of the boundary, nor on the regularity of the mass density in the interior of the body. These important open problems require further investigation.
 
The identity~\eqref{viral} is an example of ``Virial Theorem", which in the context of particle mechanics gives information on how the energy of a system in equilibrium is distributed between kinetic and potential energy. Note finally that in view of~\eqref{normalization}, the energy of the minimizer $\psi^{(1)}$ is negative, since $\psi(X)=X+d$ belongs to $\mathscr{A}^{(1)}$ for 
\[
d=\frac{1}{M}\left(a-\int_\mathcal{B}\rho_\mathrm{ref}(X)X\,dX\right).
\]


\subsubsection*{Examples of stored energy functions}\label{examples}
The stored energy function of Ogden materials is given by
\begin{equation}\label{ogden}
w(X,F)=\sum_{i=1}^La_i(X)(\tr C)^{\gamma_i/2}+\sum_{j=1}^Nb_j(X)(\tr\mathrm{Cof}\,C)^{\delta_j/2}+\Gamma(\det F)+h(X),
\end{equation}
where $L,N\in\mathbb{N}$, $C=F^T\cdot F$ is the (left) Cauchy-Green tensor, $\gamma_i\geq 1$, for $i=1,\dots L$, $\delta_j\geq 1$, for $j=1,\dots N$, $\Gamma:(0,\infty)\to (0,\infty)$ is a $C^1$ convex function such that 
\[
\Gamma(z)\geq c_1 z^{-s},\quad\text{for some $c_1>0$},
\]
$a_i,b_j:\mathcal{B}\to (0,\infty)$ are measurable bounded functions such that
\begin{align*}
&\bar{a}=\mbox{ess\,inf}_{X\in\mathcal{B}}\min_{i=1,\dots L} a_i(X)>0,\\
&\bar{b}=\mbox{ess\,inf}_{X\in\mathcal{B}}\min_{j=1,\dots N} b_j(X)>0.
\end{align*}
The function $h$ is defined in such a way that the normalization condition~\eqref{normalization} holds.
As shown in~\cite[Th.~4.9-2]{C}, the stored energy function~\eqref{ogden} is polyconvex and satisfies the coercive inequality in (w3)$^{(i)}$ for 
\begin{equation}\label{exppq}
p=\max_{i}\gamma_i,\quad q=\max_j\delta_j,
\end{equation}
and a constant $\alpha>0$ depending on $\bar{a},\bar{b}, c_1$. 
Moreover, assuming that
\[
|\Gamma'(z)|\leq \frac{c_2}{z}(1+\Gamma(z)),\quad\text{for some $c_2>0$},
\]
the stored energy function~\eqref{ogden} satisfies (w4) as well, see~\cite[Sec.~2.4]{ball}. We remark that Ogden materials also meet other desirable physical requirements, such as {\it material frame indifference}, and provide a rich supply of case studies with important practical and theoretical applications~\cite{OSS}.

\section{Proof of the main results}\label{existence}

\subsection{Proof of Theorem~\ref{maintheo}}
To begin with we recall a result which will be used to show that the sets $\mathscr{A}^{(1)}$ and $\mathscr{A}^{(2)}$ are weakly closed. The proof can be found in~\cite{Ball}, see also~\cite[Thm.~7.6-1]{C}. 
\begin{Lemma}\label{tools}
Let $\{\psi_n\}\subset W^{1,a}(\mathcal{B})$ such that $\psi_n\rightharpoonup\psi$ in $W^{1,a}(\mathcal{B})$, $\mathrm{Cof}\,\nabla\psi_n\rightharpoonup H$ in $L^b(\mathcal{B})$ and $\det\nabla\psi_n\rightharpoonup \delta$ in $L^c(\mathcal{B})$, for 
\[
a\geq 2,\quad a^{-1}+b^{-1}\leq 1,\quad c\geq 1.
\]
Then
\[
H=\mathrm{Cof}\,\nabla\psi\ \text{ and }\ \delta=\det\nabla\psi\ \text{ a.e. on $\mathcal{B}$}.
\]
\end{Lemma}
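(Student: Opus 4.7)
The plan is to identify the two weak limits by writing each entry of $\mathrm{Cof}\,\nabla\psi$ and the determinant $\det\nabla\psi$ as divergences of polynomial expressions in $\psi$ and $\nabla\psi$, and then to pass to the weak limit via the Rellich--Kondrachov compact embedding paired with the hypothesized weak convergences (a standard weak-strong pairing). The two algebraic facts driving this are the null-Lagrangian identity
\begin{equation*}
(\mathrm{Cof}\,\nabla\psi)_{ij}=\partial_k T^{(ij)}_k(\psi,\nabla\psi),
\end{equation*}
where $T^{(ij)}_k$ is a specific bilinear form in $\psi$ and $\nabla\psi$ obtained by contracting with Levi-Civita symbols (the identity follows from Schwarz's theorem together with the antisymmetry of $\epsilon_{ikl}$ in $k$ and $l$), and the Piola identity
\begin{equation*}
\det\nabla\psi=\tfrac{1}{3}\,\mathrm{Div}\bigl((\mathrm{Cof}\,\nabla\psi)^T\cdot\psi\bigr),
\end{equation*}
obtained by combining the row-wise divergence freeness of $\mathrm{Cof}\,\nabla\psi$ with the algebraic relation $\nabla\psi\cdot(\mathrm{Cof}\,\nabla\psi)^T=(\det\nabla\psi)\,\mathbb{I}$.

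To prove $H=\mathrm{Cof}\,\nabla\psi$, I would first mollify $\psi_n$ so that the cofactor identity holds pointwise, test against $\phi\in C_c^\infty(\mathcal{B})$, integrate by parts once, and then remove the mollification to obtain
\begin{equation*}
\int_\mathcal{B}(\mathrm{Cof}\,\nabla\psi_n)_{ij}\,\phi\,dX=-\int_\mathcal{B}T^{(ij)}_k(\psi_n,\nabla\psi_n)\,\partial_k\phi\,dX.
\end{equation*}
Since $a\geq 2$ and $\mathcal{B}$ is a bounded Lipschitz domain, Rellich--Kondrachov gives the compact embedding $W^{1,a}(\mathcal{B})\hookrightarrow L^{a/(a-1)}(\mathcal{B})$ (note that $a/(a-1)\leq 2\leq a$ and that $a$ lies strictly below the Sobolev critical exponent), so up to a subsequence $\psi_n\to\psi$ strongly in $L^{a/(a-1)}$ while $\nabla\psi_n\rightharpoonup\nabla\psi$ in $L^a$. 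The weak-strong pairing gives weak $L^1$ convergence of the bilinear integrand on the right, while the left-hand side converges by hypothesis to $\int_\mathcal{B}H_{ij}\,\phi\,dX$. Reversing the integration by parts for the limit $\psi$ and varying $\phi$ yields $H_{ij}=(\mathrm{Cof}\,\nabla\psi)_{ij}$ in the distributional sense, and since both members are $L^b$ functions the identification holds a.e.

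For the determinant I would run the same scheme on the Piola identity, obtaining
\begin{equation*}
\int_\mathcal{B}\det\nabla\psi_n\,\phi\,dX=-\tfrac{1}{3}\int_\mathcal{B}\bigl((\mathrm{Cof}\,\nabla\psi_n)^T\cdot\psi_n\bigr)\cdot\nabla\phi\,dX.
\end{equation*}
The hypothesis $a^{-1}+b^{-1}\leq 1$ is equivalent to $b/(b-1)\leq a$, so Rellich--Kondrachov delivers the compact embedding $W^{1,a}\hookrightarrow L^{b/(b-1)}$ and, up to a further subsequence, $\psi_n\to\psi$ strongly in $L^{b/(b-1)}$. Combined with the weak $L^b$ convergence $\mathrm{Cof}\,\nabla\psi_n\rightharpoonup\mathrm{Cof}\,\nabla\psi$ established in the previous step, the weak-strong pairing gives the limit of the right-hand side, which by Piola's identity applied to $\psi$ (where the product $(\mathrm{Cof}\,\nabla\psi)^T\cdot\psi$ lies in $L^1$ by H\"older) equals $\int_\mathcal{B}\det\nabla\psi\,\phi\,dX$. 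Comparing with the weak $L^c$ convergence on the left yields $\delta=\det\nabla\psi$ a.e.

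The main obstacle, beyond bookkeeping with the Levi-Civita symbols, is the exponent matching that drives the weak-strong arguments: the assumption $a^{-1}+b^{-1}\leq 1$ is precisely what makes the dual integrability $L^{b/(b-1)}$ of $\psi_n$ accessible through Rellich--Kondrachov, and $a\geq 2$ is used both to obtain the compact embedding $W^{1,a}\hookrightarrow L^{a/(a-1)}$ in the cofactor step and to avoid low-integrability anomalies. The mollification step needed to make the divergence identities pointwise for merely $W^{1,a}$ maps is routine but must be handled carefully so that the weak convergences are preserved when the mollification parameter is removed; this is done by commuting the weak limit through the convolution using the uniform $L^a$ bound on $\nabla\psi_n$.
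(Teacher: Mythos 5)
The paper does not actually prove this lemma; it defers to Ball's 1977 paper and to Ciarlet, Theorem~7.6-1, and your argument reconstructs precisely the proof given in those references: the cofactor as a null-Lagrangian divergence, the Piola identity for the determinant, and weak--strong pairing through the compact embeddings $W^{1,a}\hookrightarrow L^{a/(a-1)}$ and $W^{1,a}\hookrightarrow L^{b/(b-1)}$. The cofactor half is sound as written: both sides of $(\mathrm{Cof}\,\nabla\psi)_{ij}=\partial_k T^{(ij)}_k(\psi,\nabla\psi)$ are at most bilinear in $(\psi,\nabla\psi)$, so mollifying and de-mollifying costs nothing once $a\geq 2$, and the subsequent limit passage is correct.

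The one genuine gap is in the determinant step, where ``run the same scheme'' glosses over the only delicate point of the lemma. If you fully mollify $\psi_n$, the identity $\int\det\nabla\psi_n^\epsilon\,\phi\,dX=-\tfrac13\int(\mathrm{Cof}\,\nabla\psi_n^\epsilon)^T\cdot\psi_n^\epsilon\cdot\nabla\phi\,dX$ holds, but you cannot remove the mollification when $2\leq a<3$: the left-hand side is cubic in $\nabla\psi_n^\epsilon$, so $\det\nabla\psi_n^\epsilon\to\det\nabla\psi_n$ is not available from an $L^a$ bound alone, and on the right $\mathrm{Cof}\,\nabla\psi_n^\epsilon\to\mathrm{Cof}\,\nabla\psi_n$ only in $L^{a/2}$, whose dual exponent $a/(a-2)$ exceeds the Sobolev exponent of $W^{1,a}$ near $a=2$ (it is $\infty$ at $a=2$). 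The hypothesis $\mathrm{Cof}\,\nabla\psi_n\in L^b$ concerns $\psi_n$, not its mollifications, so it does not rescue this. The correct route (Ball's Lemma~6.1(ii), reproduced in Ciarlet) is: first deduce from your cofactor identity that $\partial_j(\mathrm{Cof}\,\nabla\psi_n)_{ij}=0$ in $\mathcal{D}'(\mathcal{B})$, since $T^{(ij)}_k$ is antisymmetric in $(j,k)$ and the double divergence vanishes; then mollify \emph{only} the factor $\psi_n$ in the product $\psi_{n,i}\,(\mathrm{Cof}\,\nabla\psi_n)_{ij}$, so that the product rule $\partial_j\bigl(\psi^\epsilon_{n,i}(\mathrm{Cof}\,\nabla\psi_n)_{ij}\bigr)=\partial_j\psi^\epsilon_{n,i}\,(\mathrm{Cof}\,\nabla\psi_n)_{ij}$ is legitimate; letting $\epsilon\to 0$ then uses only $\nabla\psi_n\in L^a$, $\mathrm{Cof}\,\nabla\psi_n\in L^b$ and $a^{-1}+b^{-1}\leq 1$, together with the pointwise algebraic identity $\tr\bigl(\nabla\psi_n\cdot(\mathrm{Cof}\,\nabla\psi_n)^T\bigr)=3\det\nabla\psi_n$. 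With that identity established for each $\psi_n$ and for $\psi$, your weak--strong limit passage goes through verbatim. (In the application in this paper one has $p>3$, so the naive de-mollification would in fact suffice there; but the lemma as stated allows $a\in[2,3)$.)
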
 

In the following, the letter $C$ will be used to denote various, possibly different, positive constants. Recall that
$\Theta_\psi(X,Y)=|\psi(X)-\psi(Y)|^{-1}$ is a.e. defined on $\mathcal{B}\times\mathcal{B}$, for all $\psi\in\mathscr{A}^{(i)}$, $i=1,2$.
\begin{Lemma}\label{esttheta}
For all $\psi\in\mathscr{A}^{(i)}$, $i=1,2$, and for all $0<\lambda<3$ we have
\[
\int_\mathcal{B}\int_\mathcal{B} \Theta_\psi(X,Y)^\lambda\, dX\, dY\leq C\left(\int_\mathcal{B} \det\nabla\psi(X)^{\frac{\lambda}{\lambda-6}}\,dX\right)^\frac{6-\lambda}{3}.
\]
\end{Lemma}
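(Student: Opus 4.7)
The plan is to push the double integral forward to $\R^3 \times \R^3$ via the change of variables formula from Lemma~\ref{propertiesA1}, recognize the resulting object as a Riesz potential, apply the Hardy--Littlewood--Sobolev (HLS) inequality, and finally pull the $L^r$ norm of the pushed-forward density back to $\mathcal{B}$.

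Concretely, for $\psi \in \mathscr{A}^{(i)}$ set
\[
\mu(x) = \frac{1}{\det\nabla\psi(\psi^{-1}(x))}\,\mathbb{I}_{\psi(\mathcal{B})}(x),
\]
which is a.e.\ defined and positive on $\psi(\mathcal{B})$ by Lemma~\ref{propertiesA1}(v). Applying~\eqref{changevariable} in each variable separately with $f(x) = \int_\mathcal{B} |x-\psi(Y)|^{-\lambda}\,dY$ first, and then again in $Y$, yields
\[
\int_\mathcal{B}\int_\mathcal{B} \frac{dX\,dY}{|\psi(X)-\psi(Y)|^\lambda}
= \int_{\R^3}\int_{\R^3}\frac{\mu(x)\,\mu(y)}{|x-y|^\lambda}\,dx\,dy.
\]
For the RHS to be non-trivial we may assume the right-hand side of the claim is finite; otherwise there is nothing to prove.

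Next, since $0 < \lambda < 3$, the exponent $r = \tfrac{6}{6-\lambda}$ satisfies $1 < r < 2$ and $\tfrac{2}{r} + \tfrac{\lambda}{3} = 2$, so the HLS inequality on $\R^3$ gives
\[
\int_{\R^3}\int_{\R^3}\frac{\mu(x)\,\mu(y)}{|x-y|^\lambda}\,dx\,dy \;\leq\; C\,\|\mu\|_{L^r(\R^3)}^{\,2}.
\]
Finally, invoking~\eqref{changevariable} once more (this time with $f(x) = \det\nabla\psi(\psi^{-1}(x))^{-r}$ on $\psi(\mathcal{B})$) converts the norm back to an integral over $\mathcal{B}$:
\[
\|\mu\|_{L^r(\R^3)}^{\,r} = \int_{\psi(\mathcal{B})} \frac{dx}{\det\nabla\psi(\psi^{-1}(x))^{r}} = \int_\mathcal{B} \det\nabla\psi(X)^{1-r}\,dX.
\]
Since $1-r = \tfrac{\lambda}{\lambda-6}$ and $\tfrac{2}{r} = \tfrac{6-\lambda}{3}$, substituting back produces exactly the claimed bound.

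The main obstacle is really just verifying the exponent arithmetic and confirming that the change of variables~\eqref{changevariable} can be applied in this iterated fashion; the a.e.\ injectivity of $\psi$ (combined with $\det\nabla\psi > 0$ a.e.\ and the fact that $\psi$ maps null sets to null sets, Lemma~\ref{propertiesA1}(i)) is exactly what is needed to justify interpreting $\psi^{-1}(x)$ unambiguously a.e.\ on $\psi(\mathcal{B})$ and hence to make sense of $\mu$. Once this is in place, the estimate follows immediately from HLS with no further analysis.
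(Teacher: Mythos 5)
Your proof is correct and follows essentially the same route as the paper: push the double integral forward to $\R^3\times\R^3$ via the change of variables formula~\eqref{changevariable}, apply the Hardy--Littlewood--Sobolev inequality with exponent $r=6/(6-\lambda)$, and pull the resulting $L^r$ norm back to $\mathcal{B}$. The exponent arithmetic checks out and matches the paper's computation exactly.
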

\begin{proof}
By the the change of variables formula~\eqref{changevariable},
\[
\int_\mathcal{B}\int_\mathcal{B} \Theta_\psi(X,Y)^\lambda\, dX\, dY=\int_{\R^3}\int_{\R^3}\frac{\mathbb{I}_{\psi(\mathcal{B})}(x)}{\det\nabla\psi(\psi^{-1}(x))}\frac{\mathbb{I}_{\psi(\mathcal{B})}(y)}{\det\nabla\psi(\psi^{-1}(y))}\,\frac{dx\,dy}{|x-y|^\lambda}.
\]
By  the Hardy-Littlewood-Sobolev inequality~\cite[Th.~4.3]{L}
\begin{align*}
\int_\mathcal{B}\int_\mathcal{B} \Theta_\psi(X,Y)^\lambda\, dX\, dY&\leq C\left\|\frac{\mathbb{I}_{\psi(\mathcal{B})}}{(\det\nabla\psi)\circ\psi^{-1}}\right\|_{L^\frac{6}{6-\lambda}(\R^3)}^2\\
&=C\left(\int_{\psi(\mathcal{B})(x)}\det\nabla\psi(\psi^{-1}(x))^\frac{6}{\lambda-6}\,dx\right)^{\frac{6-\lambda}{3}}\\
&=C\left(\int_\mathcal{B} \det\nabla\psi(X)^{\frac{\lambda}{\lambda-6}}\,dX\right)^\frac{6-\lambda}{3},
\end{align*}
and the proof is complete.
\end{proof}

Using Lemma~\ref{esttheta} with $\lambda=1$ and H\"older's inequality we obtain
\begin{equation}\label{estpot}
|E_\mathrm{pot}[\psi]|\leq C\left(\int_\mathcal{B}\frac{dX}{\det\nabla\psi(X)^{1/5}}\right)^{\frac{5}{3}}\leq C\left(\int_\mathcal{B}\frac{dX}{\det\nabla\psi(X)^{s}}\right)^{\frac{1}{3s}},
\end{equation}
where $s$ is the exponent given in (w3)$^{(i)}$.
Combining~\eqref{estpot} with (w3)$^{(i)}$ and the lower bound~\eqref{rhoref} on $\rho_\mathrm{ref}$ we have
\begin{equation}\label{lowbound}
I[\psi]
\geq \alpha \rho_0 
\int_\mathcal{B}\frac{dX}{\det\nabla\psi(X)^{s}}
-   \|\rho_\mathrm{ref}\|_{L^\infty(\mathcal{B})} \| h\|_{L^1(\mathcal{B}) } 
-C\left(\int_\mathcal{B}\frac{dX}{\det\nabla\psi(X)^s}\right)^{\frac{1}{3s}}.
\end{equation}
Since $s>1/3$, we infer that $I[\psi]$ is bounded from below over $\mathscr{A}^{(i)}$.
The lower boundedness of $I[\psi]$ implies the existence of minimizing sequences in the space $\mathscr{A}^{(i)}$, i.e. a sequence 
 denoted by $\{\psi_n^{(i)}\}$ such that 
 $$
 I[\psi_n^{(i)} ] \longrightarrow \inf_{\mathscr{A}^{(i)}} I , \quad \mbox{as $n\to\infty$}.
 $$
Owing to~\eqref{lowbound}, along any minimizing sequence we have
\begin{equation}\label{estdeterminant}
\int_\mathcal{B}\frac{dX}{\det\nabla\psi^{(i)}_n(X)^{s}}\leq C,
\end{equation}
and by~\eqref{estpot} we obtain that the sequence $\{E_\mathrm{pot}[\psi^{(i)}_n]\}$ is bounded.
Since the sequence $\{\psi^{(i)}_n\}$ is minimizing, then 
$\{E_\mathrm{str}[\psi^{(i)}_n]\}$ is also bounded. Using (w3)$^{(i)}$,~\eqref{rhoref} and the bound $\det F\leq C|F|^3$, we infer that
\begin{align*}
&\{\nabla\psi_n^{(i)}\}\text{ is bounded in $L^p(\mathcal{B})$},\\ 
&\{\det\nabla\psi_n^{(i)}\}\text{ is bounded in $L^{p/3}(\mathcal{B})$}, \\
&\{\mathrm{Cof}\,\nabla\psi_n^{(i)}\} \text{ is bounded in $L^q(\mathcal{B})$}.
\end{align*}


Finally, having prescribed the boundary value of $\{\psi^{(1)}_n\}\subset\mathscr{A}^{(1)}$ and the average of $\{\psi^{(2)}_n\}\subset\mathscr{A}^{(2)}$, Poincar\'e's inequality entails that
\[
\{\psi_n^{(i)}\}\text{ is bounded in $W^{1,p}(\mathcal{B})$}.
\]

It follows that any minimizing sequence possesses a subsequence  $\{\psi_n^{(i)}\}$ such that, for some $\psi^{(i)}\in W^{1,p}(\mathcal{B})$, $\delta\in L^{p/3}(\mathcal{B})$, $H\in L^q(\mathcal{B})$,
\[
\psi_n^{(i)}\rightharpoonup\psi^{(i)},\quad \text{in $W^{1,p}(\mathcal{B})$},\qquad \det\nabla\psi_n^{(i)}\rightharpoonup \delta\quad\text{in $L^{p/3}(\mathcal{B})$},\qquad\mathrm{Cof}\,\nabla\psi_n^{(i)}\rightharpoonup H\quad\text{in $L^{q}(\mathcal{B})$}
\] 
and in addition, by the Rellich-Kondra\v{s}ov imbedding theorem, 
\[
\psi_n^{(i)}\to \psi^{(i)},\quad\text{in $C^0(\overline{\mathcal{B}})$}.
\]
Moreover, by Lemma~\ref{tools},
\[
\delta=\det\nabla\psi^{(i)},\quad H=\mathrm{Cof}\,\nabla\psi^{(i)}.
\]
Now we show that 
\begin{equation}\label{closed}
\psi^{(i)}\in\mathscr{A}^{(i)}.
\end{equation}
By weak convergence, $\det\nabla\psi^{(i)}\geq 0$ a.e. on $\mathcal{B}$.
We claim that $\det\nabla\psi^{(i)}> 0$ a.e. on $\mathcal{B}$. Indeed, suppose by contradiction that there exists a set $L$ such that 
meas $L>0$ and $\det\nabla\psi^{(i)}= 0$ on $L$. 
Then, as shown in~\cite[pagg.~374-5]{C}, there exists a 
subsequence of $\{\psi^{(i)}_n\}$ (not relabeled) which converges pointwise a.e. on $L$ to a function $\psi^{(i)}$ satisfying $\det\nabla\psi^{(i)}= 0$ on $L$. 
Consequently, by Fatou lemma, 
$$\liminf_{n\to\infty} E_{\mathrm{str}}[\psi^{(i)}_n]=+\infty,$$
which implies that $I=+\infty$ for all $\psi\in\mathscr{A}^{(i)}$. However this is impossible, because the space $\mathscr{A}^{(i)}$ is not empty, and so
$\det\nabla\psi^{(i)}>0$ holds a.e. on $\mathcal{B}$. This completes the proof of~\eqref{closed} when $i=1$. Let us now consider the case $i=2$. Clearly $\psi^{(1)}$ satisfies the last condition in the definition of $\mathscr{A}^{(1)}$ (by the weak convergence $\psi_n^{(1)}\rightharpoonup\psi^{(1)}$ in $L^p(\mathcal{B})$) and therefore it only remains to show that meas$\,S_{\psi^{(1)}}=0$.  By~\eqref{changevariable}, the identity 
\[
\int_\mathcal{B}\det\nabla\psi^{(1)}_n(X)\,dX=\mathrm{meas}\,\psi^{(1)}_n(\mathcal{B}),
\]
holds for all minimizing sequences $\{\psi^{(1)}_n\}$.
Passing to the limit $n\to\infty$, using the weak convergence $\det\nabla\psi_n^{(1)}\rightharpoonup \det\nabla\psi^{(1)}$ in $L^{p/3}(\mathcal{B})$ and the uniform convergence $\psi_n^{(1)}\to \psi^{(1)}$, we obtain
\[
\int_\mathcal{B}\det\nabla\psi^{(1)}(X)\,dX=\mathrm{meas}\,\psi^{(1)}(\mathcal{B}).
\]
Applying~\eqref{changevariablegen} to the left hand side we obtain
\begin{equation}\label{temporale}
\int_{\psi^{(1)}(\mathcal{B})}\mathrm{card}\{(\psi^{(1)})^{-1}(x)\}dx=\mathrm{meas}\,\psi^{(1)}(\mathcal{B}).
\end{equation}
If meas$\,S_{\psi^{(1)}}>0$, we would have
\begin{align*}
\int_{\psi^{(1)}(\mathcal{B})}\mathrm{card}\{(\psi^{(1)})^{-1}(x)\}dx&=\mathrm{meas}\,(\psi^{(1)}(\mathcal{B})\setminus S_{\psi^{(1)}})+\int_{S_{\psi^{(1)}}}\mathrm{card}\{(\psi^{(1)})^{-1}(x)\}dx\\
&>\mathrm{meas}\,(\psi^{(1)}(\mathcal{B})\setminus S_{\psi^{(1)}})+\mathrm{meas}\,S_{\psi^{(1)}}=\mathrm{meas}\,\psi^{(1)}(\mathcal{B})
\end{align*}
and therefore~\eqref{temporale} leads to an absurd result. Hence  meas$\,S_{\psi^{(1)}}=0$ must hold, which completes the proof of~\eqref{closed} for $i=2$.

Now, the assumptions (w1), (w2) imply
\[
\liminf_{n\to\infty}\int_\mathcal{B}\rho_\mathrm{ref}(X)w(X,\nabla\psi_n)\,dX\geq \int_\mathcal{B}\rho_\mathrm{ref}(X)w(X,\nabla\psi^{(i)})\,dX,
\]
see~\cite{Ball} and~\cite[Th.~7.7-1]{C}.
Thus the proof of existence of minimizers is complete if we show that
\begin{equation}\label{convpot}
 E_\mathrm{pot}[\psi_n^{(i)}]\to E_\mathrm{pot}[\psi^{(i)}]. 
 \end{equation}
We estimate
\begin{align*}
|E_\mathrm{pot}[\psi_n^{(i)}]-&E_\mathrm{pot}[\psi^{(i)}]|\leq C\int_\mathcal{B}\int_\mathcal{B}|\Theta_{\psi_n^{(i)}}(X,Y)-\Theta_{\psi^{(i)}}(X,Y)|\,dX\,dY\\
&=C\int_\mathcal{B}\int_\mathcal{B}\Theta_{\psi_n^{(i)}}\Theta_{\psi^{(i)}}||\psi^{(i)}(X)-\psi^{(i)}(Y)|-|\psi_n^{(i)}(X)-\psi_n^{(i)}(Y)||\,dX\,dY\\
&\leq C\int_\mathcal{B}\int_\mathcal{B}\Theta_{\psi_n^{(i)}}\Theta_{\psi^{(i)}}|\psi^{(i)}(X)-\psi^{(i)}(Y)-\psi_n^{(i)}(X)+\psi_n^{(i)}(Y)|\,dX\,dY\\
&\leq C\int_\mathcal{B}\int_\mathcal{B}\Theta_{\psi_n^{(i)}}\Theta_{\psi^{(i)}}|\psi_n^{(i)}(X)-\psi^{(i)}(X)|\,dX\,dY\\
&\leq C\|\psi_n^{(i)}-\psi^{(i)}\|_{L^\infty(\mathcal{B})}\|\Theta_{\psi_n^{(i)}}\Theta_{\psi^{(i)}}\|_{L^1(\mathcal{B}\times \mathcal{B})}\\
&\leq C\|\psi_n^{(i)}-\psi^{(i)}\|_{L^\infty(\mathcal{B})}\|\Theta_{\psi_n^{(i)}}\|_{L^2(\mathcal{B}\times \mathcal{B})}\|\Theta_{\psi^{(i)}}\|_{L^2(\mathcal{B}\times \mathcal{B})}.
\end{align*}
By Lemma~\ref{esttheta},~\eqref{estdeterminant}, and since $s>1/2$, $\{\Theta_{\psi_n^{(i)}}\}$ and $\{\Theta_{\psi^{(i)}}\}$ are bounded in $L^{2}(\mathcal{B}\times\mathcal{B})$; using the uniform convergence $\psi_n^{(i)}\to\psi^{(i)}$, the claim~\eqref{convpot} follows.

To conclude our discussion on the existence of minimizers, we want to show that the convergence of the potential energy functional can also be obtained by using the properties of the spatial density along minimizing sequences, which is the argument used for fluids~\cite{GR}. We define
\[
\overline{\rho}_{\psi_{n}^{(i)}}=\rho_{\psi_n^{(i)}}\mathbb{I}_{\psi_n^{(i)}(\mathcal{B})}.
\]
Before stating the next result, we remark that since $\{\psi_n^{(i)}\}$ is uniformly bounded in $\mathcal{B}$, the supports of the spatial densities $\overline{\rho}_{\psi_n^{(i)}}$ are all contained in a common compact region of $\R^3$.

\begin{Proposition}\label{convrho}
For any minimizing sequence $\{\psi_n^{(i)}\}\subset\mathscr{A}^{(i)}$, there exists a subsequence such that
\[
\overline{\rho}_{\psi_n^{(i)}}\rightharpoonup\overline{\rho}_{\psi^{(i)}}\quad\text{in}\ L^\gamma(\R^3),\quad \text{for all }\,1\leq \gamma\leq 1+s. 
\]
Moreover  $e_\mathrm{pot}[\overline{\rho}_{\psi_n^{(i)}}]\to e_\mathrm{pot}[\overline{\rho}_{\psi^{(i)}}]$.
\end{Proposition}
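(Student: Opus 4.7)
The strategy has three stages: first uniformly bound $\{\overline{\rho}_{\psi_n^{(i)}}\}$ in $L^{1+s}(\R^3)$ via a change-of-variables computation; then identify the weak limit with $\overline{\rho}_{\psi^{(i)}}$ by testing against continuous compactly supported functions and using the uniform convergence of $\psi_n^{(i)}$; and finally deduce the convergence of the potential energy $e_\mathrm{pot}$ by combining the weak convergence of the densities with a strong compactness property of the induced Newtonian potentials.

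For the uniform bound, an application of~\eqref{changevariable} with $f(x)=\rho_{\psi_n^{(i)}}^{\gamma-1}(x)$ yields
\begin{equation*}
\int_{\R^3}\overline{\rho}_{\psi_n^{(i)}}^\gamma\,dx \;=\; \int_\mathcal{B}\frac{\rho_\mathrm{ref}(X)^\gamma}{\det\nabla\psi_n^{(i)}(X)^{\gamma-1}}\,dX.
\end{equation*}
For $\gamma=1+s$ the right-hand side is dominated by $\|\rho_\mathrm{ref}\|_\infty^{1+s}\int_\mathcal{B}(\det\nabla\psi_n^{(i)})^{-s}dX$, which is uniformly bounded by~\eqref{estdeterminant}; together with $\|\overline{\rho}_{\psi_n^{(i)}}\|_{L^1(\R^3)}=M$ from~\eqref{localmasscons}, log-convexity interpolation delivers uniform $L^\gamma(\R^3)$ bounds for every $1\leq\gamma\leq 1+s$. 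Reflexivity of $L^{1+s}(\R^3)$, the Dunford--Pettis criterion for $\gamma=1$ (equi-integrability follows from the $L^{1+s}$-bound on a common compact support), and a diagonal argument produce a subsequence along which $\overline{\rho}_{\psi_n^{(i)}}\rightharpoonup \rho^*$ in $L^\gamma(\R^3)$ simultaneously for all $1\leq\gamma\leq 1+s$. Testing against $\phi\in C_c(\R^3)$ and invoking~\eqref{changevariable} once more,
\begin{equation*}
\int_{\R^3}\overline{\rho}_{\psi_n^{(i)}}\phi\,dx=\int_\mathcal{B}\rho_\mathrm{ref}(X)\,\phi(\psi_n^{(i)}(X))\,dX\ \longrightarrow\ \int_\mathcal{B}\rho_\mathrm{ref}(X)\,\phi(\psi^{(i)}(X))\,dX=\int_{\R^3}\overline{\rho}_{\psi^{(i)}}\phi\,dx,
\end{equation*}
by dominated convergence applied to the uniformly convergent sequence $\psi_n^{(i)}\to\psi^{(i)}$. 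Density of $C_c$ in $L^{(1+s)'}$ then forces $\rho^*=\overline{\rho}_{\psi^{(i)}}$.

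For the convergence of $e_\mathrm{pot}$, set $\rho_n = \overline{\rho}_{\psi_n^{(i)}}$, $\rho = \overline{\rho}_{\psi^{(i)}}$ and $U_\mu(x)=\int|x-y|^{-1}\mu(y)\,dy$. In either regime~\eqref{esse} or~\eqref{esse2} one has $s>2$, so $1+s>3$ and the H\"older conjugate $r=(1+s)/s$ lies in $(1,3)$. From $-\Delta U_{\rho_n} = 4\pi\rho_n$ and the uniform $L^{1+s}$-boundedness of $\{\rho_n\}$ on the common compact support $K$, interior elliptic regularity gives that $\{U_{\rho_n}\}$ is bounded in $W^{2,1+s}_{\mathrm{loc}}(\R^3)$. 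Since $1+s>3$, the embedding $W^{2,1+s}\hookrightarrow C^{1,\alpha}$ with $\alpha=1-3/(1+s)>0$, combined with Rellich--Kondra\v{s}ov, yields along a further subsequence $U_{\rho_n}\to U_{\rho}$ strongly in $C^0_{\mathrm{loc}}(\R^3)$, hence in $L^r(K)$. The decomposition
\begin{equation*}
e_\mathrm{pot}[\rho_n]-e_\mathrm{pot}[\rho] = \int_K(\rho_n-\rho)\,U_\rho\,dx + \int_K(\rho_n-\rho)(U_{\rho_n}-U_\rho)\,dx + \int_K\rho\,(U_{\rho_n}-U_\rho)\,dx
\end{equation*}
then sends all three terms to zero: the first by weak convergence of $\rho_n-\rho$ in $L^{1+s}(K)$ paired with $U_\rho\in L^r(K)$, the third by strong convergence of $U_{\rho_n}-U_\rho$ in $L^r(K)$ paired with $\rho\in L^{1+s}(K)$, and the middle one by H\"older together with the boundedness of $\|\rho_n-\rho\|_{L^{1+s}}$.

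The principal obstacle is the third stage: $e_\mathrm{pot}$ is a quadratic form in a singular nonlocal kernel, and bare weak $L^{1+s}$-convergence of the densities is \emph{a priori} insufficient to pass to the limit. The saving grace is that the strong coercivity in (w3)$^{(i)}$ forces $s>2$; this places $\rho_n\in L^{1+s}$ with $1+s>3$, which in turn lifts $U_{\rho_n}$ into a subcritical Sobolev class and thereby upgrades weak convergence of the densities to strong convergence of the potentials, completing the argument.
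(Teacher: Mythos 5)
Your argument is correct, and its first two stages coincide with the paper's: the same change-of-variables identity giving the uniform $L^{1+s}$ bound via \eqref{estdeterminant}, the same extraction of a weak limit (reflexivity for $\gamma>1$, Dunford--Pettis for $\gamma=1$), and the same identification of the limit with $\overline{\rho}_{\psi^{(i)}}$ by testing against continuous functions and exploiting the uniform convergence $\psi_n^{(i)}\to\psi^{(i)}$.

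The genuine difference is in the last step. The paper simply invokes the known compactness of $e_\mathrm{pot}$ along weakly convergent, compactly supported sequences, citing Lions and Rein (where the argument proceeds by splitting the kernel $|x-y|^{-1}$ into its near-singular and regular parts). You instead give a self-contained proof through the Newtonian potential: $-\Delta U_{\rho_n}=4\pi\rho_n$, Calder\'on--Zygmund bounds in $W^{2,1+s}_{\mathrm{loc}}$, and the embedding into $C^{1,\alpha}$, which is available because the coercivity exponents in (w3)$^{(i)}$ force $s>2$ and hence $1+s>3$. Your three-term decomposition of $e_\mathrm{pot}[\rho_n]-e_\mathrm{pot}[\rho]$ is algebraically exact and each term is handled correctly. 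Two small remarks. First, Rellich only gives convergence of $U_{\rho_n}$ to \emph{some} limit along a subsequence; to identify it as $U_\rho$ you should add that $U_{\rho_n}(x)\to U_\rho(x)$ pointwise, which follows by pairing the weak convergence $\rho_n\rightharpoonup\rho$ in $L^{1+s}$ with $y\mapsto|x-y|^{-1}\mathbb{I}_K(y)\in L^{(1+s)/s}$. Second, your route uses more than is needed: the kernel-splitting argument of the cited references works for any $1+s>6/5$ with common compact support, whereas yours leans on $s>2$; this costs nothing here but makes the argument less portable. (Also, in the opening computation the test function in \eqref{changevariable} should be $f=\rho_{\psi_n^{(i)}}^{\gamma}$, not $\rho_{\psi_n^{(i)}}^{\gamma-1}$, though the displayed identity you derive is the correct one.)
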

\begin{proof}
First we observe that
\[
\int_{\R^3}\overline{\rho}_{\psi_n^{(i)}}(x)^{1+s}\,dx=\int_\mathcal{B}\frac{\rho_{\mathrm{ref}}(X)^{1+s}}{\det\nabla\psi_n^{(i)}(X)^s}\,dX\leq 
\|\rho_\mathrm{ref}\|_{L^\infty (\mathcal{B})}^{1+s}
\int_{\mathcal{B}}\frac{dX}{\det\nabla\psi_n^{(i)}(X)^s},
\]
whence by~\eqref{estdeterminant}, $\overline{\rho}_{\psi_n^{(i)}}$ is bounded in $L^{1+s}(\R^3)$. Since furthermore $\|\overline{\rho}_{\psi_n^{(i)}}\|_{L^1(\R^3)}=M$, for all $n$, there exists $\rho_*^{(i)}\in L^\gamma(\R^3)$ and a subsequence $\{\overline{\rho}_{\psi_n^{(i)}}\}$ such that
\[
\overline{\rho}_{\psi_n^{(i)}}\rightharpoonup\rho_*^{(i)}\quad\text{in}\ L^\gamma(\R^3),\ 1<\gamma\leq 1+s.
\]
Moreover by the Dunford-Pettis Theorem,
\[
\overline{\rho}_{\psi_n^{(i)}}\rightharpoonup\rho_*^{(i)}\quad\text{in}\ L^1(\R^3).
\]
Next we show that $\rho_*^{(i)}=\overline{\rho}_{\psi^{(i)}}$ almost everywhere. Let $\phi\in C^0(\R^3)$. We have
\begin{align*}
\int_{\R^3}(\rho_*^{(i)}-\overline{\rho}_{\psi^{(i)}})\phi\,dx&=\lim_{n\to\infty}\int_{\R^3}(\overline{\rho}_{\psi_n^{(i)}}-\overline{\rho}_{\psi^{(i)}})\phi\,dx\\
&=\lim_{n\to\infty}\int_\mathcal{B}\rho_\mathrm{ref}(X)[\phi(\psi_n^{(i)}(X))-\phi(\psi^{(i)}(X))] \,dX=0,
\end{align*}
and the claim follows. 
Finally, the assertion $e_\mathrm{pot}[\overline{\rho}_{\psi_n^{(i)}}]\to e_\mathrm{pot}[\overline{\rho}_{\psi^{(i)}}]$ follows by the well-known compactness properties of the functional $e_\mathrm{pot}$, see~\cite[pag.~125]{Lions} and~\cite[Lemma~3.7]{GR2}.
\end{proof}

\subsection{Proof of Theorem~\ref{propmin}}
The claim $\sigma^{(i)}\in L^1(\Omega^{(i)})$ follows by (w4) and the change of variable formula, for we have
\begin{align*}
\int_{\Omega^{(i)}}|\sigma^{(i)}(x)|\,dx&=\int_{\Omega^{(i)}}\left|\frac{\partial w}{\partial F}(X,\nabla\psi^{(i)})\cdot(\nabla\psi^{(i)})^T\right|_{_{X=(\psi^{(i)})^{-1}(x)}}\frac{\rho_\mathrm{ref}((\psi^{(i)})^{-1}(x))}{\det\nabla\psi^{(i)}((\psi^{(i)})^{-1}(x))} \,dx\\
&\leq\int_\mathcal{B}\left|\frac{\partial w}{\partial F}(X,\nabla\psi^{(i)})\cdot(\nabla\psi^{(i)})^T\right|dX\leq K\int_{\mathcal{B}}(1+w(X,\nabla\psi^{(i)}))\,dX<\infty.
\end{align*}
Next we show the validity of the identity~\eqref{viral}. Let $\tau<1$ and set
\[
\psi^{(1)}_\tau=(1-\tau)\psi^{(1)}-\tau\frac{a}{M}.
\]
Clearly, $\psi^{(1)}_\tau\in\mathscr{A}^{(1)}$, for all $\tau<1$. Moreover
\[
I[\psi^{(1)}_\tau]=\int_\mathcal{B}\rho_\mathrm{ref}(X)w(X,(1-\tau)\nabla\psi^{(1)})\,dX+\frac{1}{(1-\tau)}E_\mathrm{pot}[\psi^{(1)}]=\l(\tau).
\]
We claim that $\l(\cdot)$ is differentiable at $\tau=0$. To prove this we estimate, for $\tau>0$,
\begin{align*}
|w(X,\nabla\psi^{(1)}_\tau)-w(X,\nabla\psi^{(1)})|&\leq \left|\int_0^1\frac{d}{ds}\left(w(X,(1-s\tau)\nabla\psi^{(1)})\right)ds\right|\\
&\leq \tau\left|\int_0^1\tr\left(\frac{\partial w}{\partial F}(X,(1-s\tau)\nabla\psi^{(1)})\cdot(\nabla\psi^{(1)})^T\right)ds\right|\\
&\leq K\tau\left(1+\int_0^1 w(X,(1-s\tau)\nabla\psi^{(1)})ds\right)\\
&\leq K\tau(1+w(X,\nabla\psi^{(1)}))\\
&\quad+K\int_0^\tau|w(X,\nabla\psi_t^{(1)})-w(X,\nabla\psi^{(1)})|dt.
\end{align*}
Thus by Gr\"onwall's inequality
\[
|w(X,\nabla\psi^{(1)}_\tau)-w(X,\nabla\psi^{(1)})|\leq K\tau e^{K\tau} (1+w(X,\nabla\psi^{(1)})).
\]
The same estimate with $\tau$ replaced by $-\tau$ in the right hand side holds for $\tau<0$, whence
\[
|w(X,\nabla\psi^{(1)}_\tau)-w(X,\nabla\psi^{(1)})|\leq K|\tau| e^{K|\tau|}(1+w(X,\nabla\psi^{(1)})),\quad \forall\,\tau<1.
\]
It follows that
\begin{equation}\label{temporal}
\sup_{|\tau|\leq 1/2}\frac{1}{|\tau|}|w(X,\nabla\psi^{(1)}_\tau)-w(X,\nabla\psi^{(1)})|\leq K\,e^{K/2}(1+w(X,\nabla\psi^{(1)})).
\end{equation}
By the Dominated Convergence Theorem, $\l(\cdot)$ is differentiable at $\tau=0$ and we have
\[
\l'(0)=-\int_\mathcal{B}\rho_\mathrm{ref}(X)\tr\left(\frac{\partial w}{\partial F}(X,\nabla\psi^{(1)})\cdot(\nabla\psi^{(1)})^T\right)dX+E_\mathrm{pot}[\psi^{(1)}].
\]
Since $\psi^{(1)}$ is a minimizer, then $\l'(0)=0$ must hold:
\[
\int_\mathcal{B}\rho_\mathrm{ref}(X)\tr\left(\frac{\partial w}{\partial F}(X,\nabla\psi^{(1)})\cdot(\nabla\psi^{(1)})^T\right)dX=E_\mathrm{pot}[\psi^{(1)}].
\]
Upon the change of variable $x=\psi^{(1)}(X)$, the latter transforms into~\eqref{viral}.

Now let $\eta^{(1)}\in\mathcal{D}_{\Omega^{(1)}}(\R^3)$ and
\[
\eta^{(2)}:\overline{\Omega^{(2)}}\to\R^3,\quad \eta^{(2)}\in C^{1}(\overline{\Omega^{(2)}}),\quad\text{such that } \eta^{(2)}(x)=0,\ \text{for }x\in\partial\Omega^{(2)}.
\]
For $\tau\in\R$ we define
\[
\psi^{(i)}_\tau(X)=\psi^{(i)}(X)+\tau\,\eta(\psi^{(i)}(X))+(i-2)\frac{\tau}{M}\int_\mathcal{B}\rho_\mathrm{ref}\eta(\psi^{(i)}(X))dX.
\] 
We have
\[
\nabla\psi^{(i)}_\tau(X)=(I+\tau\nabla\eta(\psi^{(i)}(X)))\cdot\nabla\psi^{(i)}(X),
\]
hence
\[
\det\nabla\psi^{(i)}_\tau=\det\nabla\psi^{(i)}\det (I+\tau\nabla\eta(\psi^{(i)}(X))).
\]
Since $\det(I+\tau A)=1+\tau\tr(A)+O(\tau^2)$, there exists $\tau_0>0$ such that $\det\nabla\psi^{(i)}_\tau>0$ a.e. on $\mathcal{B}$ for $|\tau|\leq\tau_0$. Clearly $\psi^{(2)}_\tau\in\mathscr{A}^{(2)}$, for $|\tau|\leq\tau_0$. Moreover
\[
\int_\mathcal{B}\rho_{\mathrm{ref}}\psi^{(1)}_\tau(X)\,dX=a.
\]
Thus $\psi^{(1)}_\tau\in\mathscr{A}^{(1)}$ follows if we show that meas$\,S_{\psi^{(1)}_\tau}=0$ for $|\tau|$ small enough. To this purpose we need the following result.
\begin{Proposition}
There exists $C>0$ such that
\begin{equation}\label{holder}
|\eta^{(i)}(x)-\eta^{(i)}(y)|\leq C|x-y|,\quad\text{for all }x,y\in\Omega^{(i)},
\end{equation}
i.e., $\eta^{(i)}\in C^{0,1}(\Omega^{(i)})$.
\end{Proposition}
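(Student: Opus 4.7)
The proposition bundles two cases that merit separate arguments. I would handle $i=1$ and $i=2$ in sequence, noting that they differ in where the regularity of $\eta^{(i)}$ lives and in what role the geometry of $\Omega^{(i)}$ plays.

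For $i=1$, the argument is essentially immediate. By the definition of $\mathcal{D}_{\Omega^{(1)}}(\R^3)$, there exists $\phi\in C^1_c(\R^3)$ with $\eta^{(1)}=\phi_{|_{\Omega^{(1)}}}$. Since $\nabla\phi$ is continuous with compact support in $\R^3$, the quantity $C:=\|\nabla\phi\|_{L^\infty(\R^3)}$ is finite. For any two points $x,y\in\Omega^{(1)}\subset\R^3$, the straight segment joining them lies in $\R^3$ (where $\phi$ is defined and $C^1$), so the fundamental theorem of calculus gives
\[
|\eta^{(1)}(x)-\eta^{(1)}(y)|=|\phi(x)-\phi(y)|\leq C|x-y|.
\]
Note that no regularity of $\partial\Omega^{(1)}$ is required here, which is fortunate, because the proposition as stated for $i=1$ precedes any control on that boundary.

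For $i=2$ the difficulty is genuine: $\eta^{(2)}$ is defined only on $\overline{\Omega^{(2)}}$, and the segment $[x,y]$ need not lie inside $\Omega^{(2)}$, so the mean value argument above does not apply directly. What I do have is that $\nabla\eta^{(2)}$ is continuous on the compact set $\overline{\Omega^{(2)}}$, hence uniformly bounded, say by $M>0$. The plan is then to invoke the fact that $\Omega^{(2)}=\zeta(\mathcal{B})$ is a bounded Lipschitz domain (by assumption (z2) together with Lemma~\ref{propertiesA2}(D)), and that every bounded Lipschitz domain is \emph{quasiconvex}: there exists a constant $K>0$, depending only on the Lipschitz character of $\partial\Omega^{(2)}$, such that any two points $x,y\in\overline{\Omega^{(2)}}$ are joined by a rectifiable curve $\gamma\subset\overline{\Omega^{(2)}}$ of length at most $K|x-y|$. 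Integrating $\nabla\eta^{(2)}$ along $\gamma$ then yields
\[
|\eta^{(2)}(x)-\eta^{(2)}(y)|\leq M\cdot\mathrm{length}(\gamma)\leq MK|x-y|,
\]
giving the desired Lipschitz bound with $C=MK$.

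The quasiconvexity of bounded Lipschitz domains is the one nontrivial ingredient; I would justify it by a standard local argument: near any boundary point, $\Omega^{(2)}$ is, after a rigid motion, the epigraph of a Lipschitz function on a ball, which is trivially quasiconvex via the broken ``up--across--down'' path whose length is controlled by the Lipschitz constant of the graph. Covering $\partial\Omega^{(2)}$ by finitely many such charts (possible by compactness) and gluing these local paths through a Lebesgue number argument produces the uniform constant $K$. As an alternative I could instead invoke the Calderón--Stein extension theorem for bounded Lipschitz domains, which extends $\eta^{(2)}$ to a function in $W^{1,\infty}(\R^3)$ and thus reduces case $i=2$ to case $i=1$. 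Either way, the only real content is the known geometric regularity of $\partial\Omega^{(2)}$, which the hypotheses provide.
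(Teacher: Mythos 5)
Your proposal is correct and follows essentially the same route as the paper: for $i=1$ it exploits the global $C^1_c$ extension $\phi$ (the paper applies the estimate on a ball $B_R\supset\Omega^{(1)}$, you work directly in $\R^3$, which is equivalent), and for $i=2$ it reduces the claim to the quasiconvexity of the Lipschitz domain $\Omega^{(2)}$, which the paper invokes in its polygonal-chain form by citing Ciarlet rather than re-proving it via boundary charts. The differences are purely cosmetic.
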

\begin{proof}
We shall use the following property of $C^1$ functions $f:\Omega\to\R^3$ defined on a regular domain $\Omega$: There exists a constant $C$, which depends only on $\Omega$, such that $|f(x)-f(y)|\leq C\|\nabla f\|_{L^\infty(\Omega)}|x-y|$, for all $x,y\in\Omega$. When $\Omega$ is convex this is a trivial consequence of the Mean Value Theorem for vector valued functions (in this case we may choose $C=1$); for general domains $\Omega$, the claim is proved as follows. A standard property of any domain $\Omega$ (see~\cite[pag.~224]{C}) is that, for all $x,y\in\overline{\Omega}$, there exists a finite sequence of points $z_1,\dots z_{N+1}$, such that $z_1=x$, $z_{N+1}=y$, $z_i\in\Omega$, for all $i=2,\dots N$, the open segment with end points $z_i,z_{i+1}$ is all contained in $\Omega$,  and the inequality 
\[
\sum_{i=1}^{N}|z_i-z_{i+1}|\leq C|x-y|
\]
holds for a positive constant $C$ which depends only on $\Omega$. Using this property we have
\begin{align*}
|f(x)-f(y)|&=\left|\sum_{i=1}^N f(z_i)-f(z_{i+1})\right|=\left|\sum_{i=1}^N\int_0^1\frac{d}{ds}[f(s z_i+(1-s)z_{i+1})]\,ds\right|\\
&\leq \|\nabla f\|_{L^\infty(\Omega)}\sum_{i=1}^{N}|z_i-z_{i+1}|\leq C\|\nabla f\|_{L^\infty(\Omega)}|x-y|.
\end{align*}
Since $\Omega^{(2)}$ is a regular domain, the inequality~\eqref{holder} for $i=2$ is proved. 
For $i=1$, let $\phi\in C^1_c(\R^3)$ such that $\phi_{|_{\Omega^{(1)}}}=\eta^{(1)}$ and let $R>0$ such that $\Omega^{(1)}\subset B_R$, where $B_R$ denotes the ball with center in the origin and radius $R$. Applying the preceding result we have
\[
|\phi(x)-\phi(y)|\leq C|x-y|,\quad\text{for all $x,y\in B_R$},
\] 
where $C=C(R)$. Restricting the previous inequality to $x,y\in\Omega^{(1)}$ yields~\eqref{holder} for $i=1$.
\end{proof}
\begin{Corollary}
There exists $\tau_1>0$ such that $K_{\psi^{(1)}_\tau}=K_{\psi^{(1)}}$, for $|\tau|\leq\tau_1$. In particular, $\mathrm{meas}\,K_{\psi^{(1)}_\tau}=0$ and thus
\[
\mathrm{meas}\,S_{\psi^{(1)}_\tau}=\mathrm{meas}\,\psi^{(1)}_\tau(K_{\psi^{(1)}_\tau})=0,\quad\text{for $|\tau|\leq\tau_1$}.
\]
\end{Corollary}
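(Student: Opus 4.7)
The plan is to reduce the injectivity question to the deformation on the spatial side, where the Lipschitz bound from the previous Proposition is directly available.

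First I would rewrite $\psi^{(1)}_\tau$ in the form
\[
\psi^{(1)}_\tau(X)=\Phi_\tau(\psi^{(1)}(X))+c_\tau,\qquad \Phi_\tau(x):=x+\tau\eta^{(1)}(x),
\]
where $c_\tau:=-\frac{\tau}{M}\int_\mathcal{B}\rho_\mathrm{ref}(Y)\eta^{(1)}(\psi^{(1)}(Y))\,dY$ is a vector independent of $X$. Since $c_\tau$ is a translation, for any $X,Y\in\mathcal{B}$ one has
\[
\psi^{(1)}_\tau(X)=\psi^{(1)}_\tau(Y)\ \Longleftrightarrow\ \Phi_\tau(\psi^{(1)}(X))=\Phi_\tau(\psi^{(1)}(Y)).
\]
Thus the problem is reduced to establishing injectivity of $\Phi_\tau$ on $\Omega^{(1)}=\psi^{(1)}(\mathcal{B})$ for $|\tau|$ sufficiently small.

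This is where the Lipschitz estimate~\eqref{holder} for $\eta^{(1)}$ does the work. For any $x,y\in\Omega^{(1)}$ the reverse triangle inequality gives
\[
|\Phi_\tau(x)-\Phi_\tau(y)|\ \geq\ |x-y|-|\tau|\,|\eta^{(1)}(x)-\eta^{(1)}(y)|\ \geq\ (1-C|\tau|)|x-y|,
\]
with $C$ the constant from the Proposition. Choosing $\tau_1:=1/(2C)$ we obtain $|\Phi_\tau(x)-\Phi_\tau(y)|\geq \tfrac12|x-y|$ whenever $|\tau|\leq\tau_1$, so $\Phi_\tau$ is injective (indeed bi-Lipschitz) on $\Omega^{(1)}$. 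Combined with the equivalence above, this yields $\psi^{(1)}_\tau(X)=\psi^{(1)}_\tau(Y)$ iff $\psi^{(1)}(X)=\psi^{(1)}(Y)$, which is exactly the identity $K_{\psi^{(1)}_\tau}=K_{\psi^{(1)}}$ of the two coincidence sets.

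To finish, I would invoke Lemma~\ref{propertiesA1}(iv) to get $\mathrm{meas}\,K_{\psi^{(1)}}=0$, hence $\mathrm{meas}\,K_{\psi^{(1)}_\tau}=0$. Because $\eta^{(1)}$ extends to a $C^1_c(\R^3)$ map, the function $\psi^{(1)}_\tau$ lies in $W^{1,p}(\mathcal{B})$ with $p>3$, so Lemma~\ref{propertiesA1}(i) applies to $\psi^{(1)}_\tau$ and we conclude
\[
\mathrm{meas}\,S_{\psi^{(1)}_\tau}=\mathrm{meas}\,\psi^{(1)}_\tau(K_{\psi^{(1)}_\tau})=0.
\]
The only subtle point is making sure the Lipschitz inequality~\eqref{holder} is genuinely available on $\Omega^{(1)}$ (which is only open, not necessarily regular); but this has already been arranged in the previous Proposition via the $C^1_c(\R^3)$ extension of $\eta^{(1)}$ and the Mean Value Theorem applied on a ball containing $\overline{\Omega^{(1)}}$, so nothing extra is needed.
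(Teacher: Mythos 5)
Your proof is correct and follows essentially the same route as the paper: both arguments hinge on the Lipschitz estimate~\eqref{holder} for $\eta^{(1)}$ to get the lower bound $(1-C|\tau|)|\psi^{(1)}(X)-\psi^{(1)}(Y)|$ and take $\tau_1=1/(2C)$, concluding $K_{\psi^{(1)}_\tau}=K_{\psi^{(1)}}$ and then using that $W^{1,p}$ maps with $p>3$ send null sets to null sets. Your reformulation via injectivity of $\Phi_\tau(x)=x+\tau\eta^{(1)}(x)$ on $\Omega^{(1)}$, and your explicit treatment of the constant translation $c_\tau$, are just a cleaner packaging of the same computation.
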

\begin{proof}
Clearly, $K_{\psi^{(1)}}\subseteq K_{\psi^{(1)}_\tau}$. Now, let $X\in K_{\psi^{(1)}_\tau}$. Thus there exists $Y\in\mathcal{B}$ such that $Y\neq X$ and $\psi^{(1)}_\tau(X)=\psi^{(1)}_\tau(Y)$, i.e,
\[
\psi^{(1)}(X)+\tau\,\eta(\psi^{(1)}(X))=\psi^{(1)}(Y)+\tau\,\eta(\psi^{(1)}(Y)).
\]
Using~\eqref{holder} we obtain
\[
|\psi^{(1)}(X)-\psi^{(1)}(Y)|\geq|\psi^{(1)}(X)-\psi^{(1)}(Y)|(1-C\tau). 
\] 
Thus $\psi^{(1)}(X)=\psi^{(1)}(Y)$, for $|\tau|\leq (2C)^{-1}=\tau_1$; hence $X\in K_{\psi^{(1)}_\tau}$ and $K_{\psi^{(1)}_\tau}\subseteq K_{\psi^{(1)}}$. 
\end{proof}
In the following we shall assume that $|\tau|\leq\min\{\tau_0,\tau_1\}$, so that $\psi^{(i)}_\tau\in\mathscr{A}^{(i)}$.  Moreover, we simplify the notation by denoting both $\eta^{(1)}$ and $\eta^{(2)}$ by $\eta$. We continue by proving that the function $\tau\rightarrow I[\psi^{(i)}_\tau]$ is differentiable at $\tau=0$. We have
\begin{equation}\label{temp}
\lim_{\tau\to 0}\frac{I[\psi^{(i)}_\tau]-I[\psi^{(i)}]}{\tau}=\lim_{\tau\to 0}\frac{1}{\tau}(E_\mathrm{str}[\psi^{(i)}_\tau]-E_\mathrm{str}[\psi^{(i)}])
+\lim_{\tau\to 0}\frac{1}{\tau}(E_\mathrm{pot}[\psi^{(i)}_\tau]-E_\mathrm{pot}[\psi^{(i)}]).
\end{equation}
As shown by J.~Ball in~\cite[pag.~13]{ball2}, the bound (w4) implies that the function $\tau\to E_\mathrm{str}[\psi^{(i)}_\tau]$ is differentiable at $\tau=0$ and
\begin{align*}
\lim_{\tau\to 0}\frac{1}{\tau}(E_\mathrm{str}[\psi^{(i)}_\tau]-E_\mathrm{str}[\psi^{(i)}])&=\left(\frac{d}{d\tau}E_\mathrm{str}(\psi^{(i)}_\tau)\right)_{\tau=0}=\int_\mathcal{B}\left(\frac{d}{d\tau}w(X,\nabla\psi^{(i)}_\tau)\right)_{\tau=0}\rho_\mathrm{ref}\,dX\\
&=\int_\mathcal{B}\mathrm{Tr}\left(\frac{\partial w}{\partial F}(X,\nabla\psi^{(i)})\cdot(\nabla\psi^{(i)})^T\cdot(\nabla\eta(\psi^{(i)}(X)))^T\right)\rho_\mathrm{ref}\,dX.
\end{align*}
It will now be shown that a similar property holds for the potential term in~\eqref{temp}. By~\eqref{holder},
\[
|\psi^{(i)}_\tau(X)-\psi^{(i)}_\tau(Y)-(\psi^{(i)}(X)-\psi^{(i)}(Y))|\leq C |\tau||\psi^{(i)}(X)-\psi^{(i)}(Y)|;
\]
thus the estimate 
\[
|\psi^{(i)}_\tau(X)-\psi^{(i)}_\tau(Y)|\geq |\psi^{(i)}(X)-\psi^{(i)}(Y)|/2,\quad \text{i.e.,}\quad \Theta_{\psi^{(i)}_\tau}(X,Y)\leq 2\Theta_{\psi^{(i)}}(X,Y),
\]
holds for $|\tau|$ small enough. 
Moreover
\begin{align*}
|\Theta_{\psi^{(i)}_\tau}(X,Y)-\Theta_{\psi^{(i)}}(X,Y)|&=\Theta_{\psi^{(i)}_\tau}(X,Y)\Theta_{\psi^{(i)}}(X,Y)||\psi^{(i)}_\tau(X)-\psi^{(i)}_\tau(Y)|-|\psi^{(i)}(X)-\psi^{(i)}(Y)||\\
&\leq \Theta_{\psi^{(i)}_\tau}(X,Y)\Theta_{\psi^{(i)}}(X,Y)|(|\psi^{(i)}_\tau(X)-\psi^{(i)}(X)|+|\psi^{(i)}_\tau(Y)-\psi^{(i)}(Y)|)\\
&\leq C|\tau| \Theta_{\psi^{(i)}_\tau}(X,Y)\Theta_{\psi^{(i)}}(X,Y).
\end{align*}
Thus for $|\tau|$ small enough we have
\[
\frac{1}{|\tau|}|\Theta_{\psi^{(i)}_\tau}(X,Y)-\Theta_{\psi^{(i)}}(X,Y)|\leq C\Theta_{\psi^{(i)}}(X,Y)^2.
\]
By Lemma~\ref{esttheta}, and since $s>1/2$, the function in the right hand side is integrable on $\mathcal{B}\times \mathcal{B}$. Then, by the Dominated Convergence Theorem, and since
\[
\lim_{\tau\to 0}\frac{1}{\tau}\left(\Theta_{\psi^{(i)}_\tau}(X,Y)
-\Theta_{\psi^{(i)}}(X,Y)\right)=-\frac{(\psi^{(i)}(X)-\psi^{(i)}(Y))\cdot(\eta(\psi^{(i)}(X))-\eta(\psi^{(i)}(Y)))}{|\psi^{(i)}(X)-\psi^{(i)}(Y)|^3},
\]
for almost all $X,Y\in \mathcal{B}$, we obtain
\begin{align*}
\lim_{\tau\to 0}\frac{1}{\tau}(E_\mathrm{pot}[\psi^{(i)}_\tau]-E_\mathrm{pot}[\psi^{(i)}])
&=-
\lim_{\tau\to 0}\frac{1}{2\tau} 
\int_\mathcal{B}\int_\mathcal{B}\rho_\mathrm{ref}(X)\rho_\mathrm{ref}(Y)\left(\Theta_{\psi^{(i)}_\tau}(X,Y)
-\Theta_{\psi^{(i)}}(X,Y)\right)\,dX\, dY
\\
&=\int_\mathcal{B}\int_\mathcal{B}\rho_\mathrm{ref}(X)\rho_\mathrm{ref}(Y)\frac{(\psi^{(i)}(X)-\psi^{(i)}(Y))\cdot\eta(\psi^{(i)}(X))}{|\psi^{(i)}(X)-\psi^{(i)}(Y)|^3}\,dX\, dY.
\end{align*}
Substituting into~\eqref{temp} and using that $\psi^{(i)}$ is a minimizer yields
\begin{align*}
&\int_\mathcal{B}\mathrm{Tr}\left(\frac{\partial w}{\partial F}(X,\nabla\psi^{(i)})\cdot(\nabla\psi^{(i)}(X))^T\cdot(\nabla\eta(\psi^{(i)}(X)))^T\right)dX\\
&\qquad+\int_\mathcal{B}\int_\mathcal{B}\rho_\mathrm{ref}(X)\rho_\mathrm{ref}(Y)\frac{(\psi^{(i)}(X)-\psi^{(i)}(Y))\cdot\eta(\psi^{(i)}(X))}{|\psi^{(i)}(X)-\psi^{(i)}(Y)|^3}\,dX\, dY=0.
\end{align*}
Upon the change of variable $x=\psi^{(i)}(X)$, $y=\psi^{(i)}(Y)$, the latter equation transforms into~\eqref{equation2}, and the proof of Theorem~\ref{propmin} is concluded.

\begin{center}
{\bf Acknowledgements}
\end{center}
The first author would like to thank Jes\'us Montejo for many helpful discussions, as well as Robert~Beig and Bernd~Schmidt for their important comments on a previous version of the paper. The second author has been supported by the ``Ministerio de Ciencia e Innovaci\'on" (MICINN) of Spain (Proj. MTM2009--10878) and ``Junta de Andaluc\'{\i}a" (Proj. FQM--116).


\end{document}